\newtheorem{theorem}{Theorem}[section]
\newtheorem{lemma}[theorem]{Lemma}
\newtheorem{claim}[theorem]{Claim}
\newtheorem{proposition}[theorem]{Proposition}
\newtheorem{defn}[theorem]{Definition}
\newtheorem{result}{Result}
\newtheorem{definition}{Definition}
\newcommand{\R}{\mathbb{R}}
\def\ket#1{\mathinner{|{#1}\rangle}}
\renewcommand{\part}[2]{\frac{\partial #1}{\partial #2}}
\newcommand{\als}[1]{\begin{align*}#1\end{align*}}
\newcommand{\all}[2]{\begin{align}\label{#2} #1\end{align}}
\newcommand{\al}[1]{\begin{align} #1\end{align}}
\newcommand{\en}[1]{\left ( #1 \right )}
\newcommand{\nl}{\notag \\}
\newcommand{\norm}[1]{\lVert#1\rVert}
\newcommand{\thmref}[1]{\hyperref[#1]{{Theorem~\ref*{#1}}}}
\newcommand{\lemref}[1]{\hyperref[#1]{{Lemma~\ref*{#1}}}}
\newcommand{\remref}[1]{\hyperref[#1]{{Remark~\ref*{#1}}}}
\newcommand{\corref}[1]{\hyperref[#1]{{Corollary~\ref*{#1}}}}
\newcommand{\eqnref}[1]{\hyperref[#1]{{Equation~(\ref*{#1})}}}
\newcommand{\claimref}[1]{\hyperref[#1]{{Claim~\ref*{#1}}}}
\newcommand{\remarkref}[1]{\hyperref[#1]{{Remark~\ref*{#1}}}}
\newcommand{\propref}[1]{\hyperref[#1]{{Proposition~\ref*{#1}}}}
\newcommand{\factref}[1]{\hyperref[#1]{{Fact~\ref*{#1}}}}
\newcommand{\defref}[1]{\hyperref[#1]{{Definition~\ref*{#1}}}}
\newcommand{\exampleref}[1]{\hyperref[#1]{{Example~\ref*{#1}}}}
\newcommand{\hypref}[1]{\hyperref[#1]{{Hypothesis~\ref*{#1}}}}
\newcommand{\secref}[1]{\hyperref[#1]{{Section~\ref*{#1}}}}
\newcommand{\chapref}[1]{\hyperref[#1]{{Chapter~\ref*{#1}}}}
\newcommand{\apref}[1]{\hyperref[#1]{{Appendix~\ref*{#1}}}}
\begin{document}
\bstctlcite{IEEEexample:BSTcontrol}
\title{Quantum gradient descent for linear systems and least squares}  

\author{Iordanis Kerenidis} 

\thanks { Email: jkeren@irif.fr.} 
\affiliation{CNRS, IRIF, Universit\'e Paris Diderot, Paris, France} 
\affiliation{ Centre for Quantum Technologies, National University of Singapore, Singapore. }

\author{Anupam Prakash} 
\thanks { Email: anupamprakash1@gmail.com.} 
\affiliation{CNRS, IRIF, Universit\'e Paris Diderot, Paris, France.}

\begin{abstract} 
Quantum machine learning and optimization are exciting new areas that have been brought forward by the breakthrough quantum algorithm of Harrow, Hassidim
and Lloyd for solving systems of linear equations. The utility of {classical} linear system solvers extends beyond linear algebra as they can be leveraged to solve optimization problems using iterative methods like gradient descent. In this work, we provide the first quantum method for performing gradient descent when the gradient is an affine function. Performing $\tau$ steps of the gradient descent requires time $O(\tau C_S)$ for weighted least squares problems, where $C_S$ is the cost of performing one step of the gradient descent quantumly, which at times can be considerably smaller than the classical cost. We illustrate our method by providing two applications: first, for solving positive semidefinite linear systems, and, second, for performing stochastic gradient descent for the weighted least squares problem with reduced quantum memory requirements. We also provide a quantum linear system solver in the QRAM data structure model that provides significant savings in cost for large families of matrices.
\end{abstract} 
\thispagestyle{empty}

\maketitle

\newpage

\setcounter{page}{1}

\section{Introduction} 

Quantum Machine Learning has seen a flurry of recent exciting developments, mainly due to the breakthrough algorithm of Harrow, Hassidim, and Lloyd \cite{HHL09}, that takes as input a sparse and well-conditioned system of linear equations, and in time polylogarithmic in the system's dimension outputs the solution vector as a quantum state. In other words, given a matrix $A$ and a vector $b$, it outputs the quantum state $\ket{A^{-1} b}$ corresponding to the solution. 

The HHL algorithm has been very influential, and several works have obtained quantum algorithms for machine learning problems under similar assumptions \cite{LMR14, LMR14a, LMR13a}. However, the applications of the HHL algorithm to machine learning had several caveats as pointed out in \cite{A15}. Namely, the algorithm produces a quantum state $\ket{A^{-1} b}$ instead of a classical solution and achieves an exponential speedup only when the matrix is well conditioned and has sparsity poly-logarithmic in the dimension. In machine learning settings, the input matrices are not expected to be sparse or well structured, thus restricting the applications of the HHL algorithm. 

We introduced the QRAM data structure model \cite{KP16} in order to overcome some of these caveats. We presented a quantum algorithm for competitive recommendation systems in \cite{KP16} that provides a good recommendation to most users in expected time $O(\frac{\sqrt{k}}{\epsilon} \text{polylog}(mn))$, which is poly-logarithmic in matrix dimensions and 
has a square root dependence on the rank $k$ which is much smaller than the matrix dimensions.  The best known classical recommendation systems at the time required time polynomial in the matrix dimensions for the same task. 

Very recently, a classical recommendation system with running time $O(\text{poly}(k, \frac{1}{\epsilon} ) \text{polylog}(mn))$ was given \cite{T18} using classical instantiations of the QRAM data structures. We note that the classical recommendation algorithm \cite{T18} incurs a large polynomial overhead over the quantum algorithm and has a running time $O(\frac{k^{12}}{\epsilon^{12}} \text{polylog}(mn))$, which for inputs of interest make the new classical algorithm slower even than older classical algorithms. It remains an open question to obtain classical algorithms with a smaller polynomial overhead in $k$ and $1/\epsilon$. 

In this work, we generalize the quantum singular value estimation algorithm using QRAM data structures \cite{KP16} and obtain speedups for classes of dense matrices beyond the sparse and the low-rank cases. 
Further, we develop a framework for quantum gradient descent with affine updates with cost polynomial in the number of steps and present applications to weighted least squares problems. 

Our work is motivated by the applications of classical linear system solvers to optimization and machine learning using iterative methods. 
Examples of iterative methods include first-order methods like gradient descent and second order methods like the interior point method for linear and semidefinite programs. These methods allow us to solve a variety of problems that do not have a closed form solution and thus cannot be solved using linear systems alone. 

The iterative methods start with some initial solution state $\theta_0$ and update it iteratively according to a rule of the form $\theta_{t+1} = \theta_t + \alpha r_t$. In many cases, the updates are implemented using linear algebra operations like matrix multiplication and inversion. 
This motivates the question of whether quantum linear system solvers can be similarly used to solve more general optimization problems through iterative methods. 

However, there are obvious obstacles towards realizing a quantum iterative method. Quantum routines for matrix multiplication and inversion output a quantum state and not a classical vector that can be used directly for the next step of the method. 
Further, they succeed only with a fairly small probability and destroy the input when they fail. Hence, if the quantum procedure for finding the updates fails at any step, one needs to restart the iterative method from the very beginning. Algorithmically this implies that the running time of a naive quantum iterative method is exponential in the number of steps. Another problem is that the HHL based quantum linear systems solvers provide exponential speedups only for matrices that are poly-logarithmically sparse in the dimension.  

In this work, we make progress on both the challenges described above. First, we provide a first order quantum iterative method for performing gradient descent with affine update rules in time polynomial in the number of steps. Secondly, we design a quantum linear systems solver in the QRAM data structure model that allows us to obtain potentially significant speedups for many classes of dense matrices. We provide as applications iterative quantum algorithms for the weighted least squares problem including one that can perform stochastic gradient descent. 

The weighted least squares problem is widely used for regression or data fitting. The data matrix for least squares is often very skewed, i.e., the number of data points is vastly larger than the dimension of the points. Stochastic gradient descent is advantageous in such settings as it estimates the gradient by computations on a small sized batch of data points and avoids costly computations on large matrices. The quantum stochastic gradient descent algorithm allows us to reduce the size of the required QRAM as at any given time only a small subset of the data needs to be in quantum memory. 

Quantum algorithms for gradient estimation have been studied before \cite{J05, B05, J09, GAW17}, where a quantum algorithm is used to estimate the gradient of a given function, while a classical algorithm performs the gradient descent. A quantum gradient descent algorithm for polynomial optimization problems has also been presented \cite{RSPS16}, but the running time of the algorithm depends exponentially on the number of steps. Our algorithm achieves a polynomial scaling instead in the number of steps when the gradient is an affine function, a case that frequently arises in practice for solving least squares problems. 

\subsection{The QRAM data structure model} 

The QRAM data structure model requires a full-scale quantum computer equipped with a quantum random access memory.  
The QRAM is analogous to a classical random access memory and the model can be viewed as a quantum analog of using 
data structures to speed up computation. Let us first define the QRAM more precisely and compare it to the more standard oracle model. 
\begin{definition} 
A quantum random access memory (QRAM) is a device that stores data $(i,x_{i}) \in [n]$ and allows queries of the form $\ket{i, 0} \to \ket{i, x_{i}}$ to be made in superposition
in time $O($\normalfont{polylog}$(n))$.
\end{definition} 
\noindent The standard oracle model also assumes that one can make the query $\ket{i, 0} \to \ket{i, x_{i}}$ in poly-logarithmic time. However,  in the absence of a QRAM, queries $\ket{i, 0} \to \ket{i, x_{i}}$ can be made efficiently only if $x_{i}$ is a function of $i$ that can be computed using a quantum circuit of size poly-logarithmic in $N$. The QRAM implements the standard oracle for arbitrary data.

In addition, we require a notion of efficient QRAM data structures. A QRAM data structure is efficient if it avoids additional time and space overheads over storing the data sequentially in the QRAM and can be updated efficiently.  
 \begin{definition} \label{d2}
A QRAM data structure storing $n$ entries is said to be efficient it its memory requirement is $\widetilde{O}(n)$ and the cost of updating, inserting or deleting a single entry is $O($\normalfont{polylog}$(n))$.\end{definition} 
\noindent Let us elaborate this notion a bit further. A dataset of size $n$ can be stored in the QRAM by storing each of its entries sequentially in a simple list or array. The total memory used is $\widetilde{O}(n)$ (with logarithmic factors to account for memory management) and the update time for a single entry $O(\log(n))$. An efficient data structure allows us to store the dataset in a more elaborate form, for example in a binary tree, with the same memory and update time requirements as for when the dataset is stored sequentially.

 We are now ready to define the QRAM data structure model used for our algorithms,
 \begin{definition} \label{d3}
An algorithm in the QRAM data structure model that processes a data-set of size $m$ has two steps: 
\begin{enumerate} 
\item A pre-processing step with complexity $\widetilde{O}(m)$ that constructs efficient QRAM data structures for storing the data. 
\item A computational step where the quantum algorithm has access to the QRAM data structures constructed in step 1. 
\end{enumerate} 
The complexity of algorithms in this model is measured by the cost for step 2.
\end{definition} 
\noindent In order to see the relevance of this model, consider a data-processing setting where it is feasible to store a dataset of size $n$, but infeasible to perform a super-linear computation with cost $O(n^2)$ or $O(n^3)$.  In this setting, a quantum algorithm in the QRAM data structure model requires space $\widetilde{O}(n)$ and could perform the same intensive computation in time sub-linear in $n$, thus making the application feasible. It is therefore an effective model for data-processing applications. 

As our algorithms require a QRAM, we briefly discuss the feasibility of implementing a QRAM.  A bucket brigade architecture for a QRAM has been proposed \cite{GLM08a}, where only $O(\log n)$ gates from a circuit of size $O(n)$ are active when a query is made. An error analysis for this architecture was given \cite{AGJMP15}, where it was shown that the error per gate for applications like the HHL algorithm and other quantum machine learning algorithms needs to be inverse in the number of active gates, i.e., inverse in the logarithm of the number of qubits, and quantum error correction may not be required for such applications. 

We are now ready to present our main results and discuss the ideas used for the proofs. We present our results on linear system solvers in Section \ref{one2}, the quantum gradient iterative method in Section \ref{one3} and applications of the quantum iterative method in Section \ref{one4}.

\subsection{An improved quantum linear systems solver} \label{one2}

We first introduce some quantum and linear algebra notation that helpful for the discussion of our main results and is used throughout the paper. 
The set $\{ 1, 2, \cdots, n\}$ is denoted by $[n]$, the standard basis vectors in $\R^{n}$ are denoted by $e_{i}, i \in [n]$.
 For a vector $x \in \R^n$ we denote the $\ell_{p}$-norm as $\norm{x}_{p} = (\sum_{i} x_i^p)^{1/p}$. The Euclidean norm $\norm{x}_{2}$ is denoted as $\norm{x}$. The rank of a matrix is denoted as $rk(A)$. A matrix is positive semidefinite (psd) if it is symmetric and has non-negative eigenvalues, the notation $A \succeq 0$ indicates that $A$ is a psd matrix. The singular value decomposition of a symmetric matrix $A\in \R^{n\times n}$ is written as $A= \sum_{i} \lambda_{i} v_{i} v_i^T$ where $\lambda_{i} \geq 0$ are the eigenvalues and $v_{i}$ are the corresponding eigenvectors.
 
The singular value decomposition of $A\in \R^{m\times n}$ is written as $A= \sum_{i} \sigma_{i} u_{i} v_{i}^{T} $ where $\sigma_{i}$ are the singular 
values and $u_{i}$ and $v_{i}$ are the left and right singular vectors. The Frobenius norm is defined as $\norm{A}_{F}^{2}: = \sum_{ij} A_{ij}^{2} = \sum_{i} \sigma_{i}^{2}$ while the spectral norm $\norm{A} = \sigma_{max}$, where $\sigma_{\max}$ is the largest singular value. The condition number $\kappa(A) = \sigma_{max}/\sigma_{min}$. 

The $i$-th row of matrix $A \in \R^{m\times n}$ is denoted as $a_{i}$ and the $j$-th column is denoted as $a^{j}$. The $\circ$ operator denotes the Hadamard product, that is $A = P \circ Q$ implies that $A_{ij} = P_{ij}. Q_{ij}$ for $i \in [m], j \in [n]$. For a matrix $A \in \R^{m \times n}$, the maximum of the $p$-th power of the $\ell_{p}$ norm for the row vectors is denoted $s_{p}(A) := \max_{i \in [m]} \norm{a_{i}}_{p}^{p}$, the maximum of the $p$-th power of the $\ell_{p}$ norm of the column vectors is $s_{p}(A^{T} )$. The sparsity $s(A)$ is the maximum number of non-zero entries in a row of $A$. The $\widetilde{O}$ notation is used to suppress factors poly-logarithmic in vector or matrix dimensions, that is $O(f(n) \text{polylog} (mn))$
is written as $\widetilde{O}( f(n) )$. 

A vector $v \in \R^{n}$ is encoded into $O(\log n)$ dimensional normalized vector state $\ket{v} = \frac{1}{\norm{v}} \sum_{i \in [n]} v_{i} \ket{i}$. 
However, there is one exception to this use of notation, in the gradient descent procedure we denote the unnormalized garbage state by $\ket{G}$. In all other cases, the notation $\ket{v}$ where $v \in \R^{n}$ denotes the normalized vector state.

We first present our results on linear system solvers in the QRAM data structure model. The linear system solver that we present here can offer considerable speedups over classical algorithms for the problem sampling from the solutions of linear systems for large families of dense matrices, including matrices whose rows have bounded $\ell_{1}$ norm. 

The main technical tool for our linear system solver is algorithm for singular value estimation (SVE) that generalizes the algorithm in \cite{KP16}. 
More precisely, let $A= \sum_{i \in [k]} \sigma_{i} u_{i} v_{i}^{T} $ be the singular value decomposition for $A \in \R^{m\times n}$ 
where $k = \min(m,n)$ and $\sigma_{i}$ are the (possibly $0$) singular values and $u_{i}$ and $v_{i}$ are the left and right singular vectors. A given vector $b$ can be viewed as a superposition of the singular vectors of the matrix $A$, i.e. $\ket{b} = \sum_i \beta_i \ket{v_i}$. 

\begin{definition} \label{sve} 
Let $A= \sum_{i \in [k]} \sigma_{i} u_{i} v_{i}^{T} $ be the singular value decomposition for $A \in \R^{m \times n}$ for $k=\min(m,n)$ and let $\delta>0$. The singular value estimation (SVE) problem with error $\delta$ is defined as: Given $\ket{b}= \sum_{i\in [k]} \beta_i \ket{v_i}$, to perform the mapping  
\[
 \sum_{i\in [k]}  \beta_i \ket{v_i} \ket{0} \rightarrow \sum_{i \in [k]}  \beta_i \ket{v_i}\ket{\overline{\sigma}_i},
\]
such that $|\overline{\sigma}_i - \sigma_i| \leq \delta$ for all $i \in [k]$ with probability $1-1/poly(n)$. 
\end{definition} 
\noindent We had provided an SVE algorithm with running time $\widetilde{O}( \norm{A}_{F}/\delta)$ in our work on recommendation systems \cite{KP16}. This algorithm 
relied on a particular efficient QRAM data structure for storing $A$. In this work, we provide efficient QRAM data structures for storing $A$ that allow us to obtain an SVE algorithm 
with a running time that depends on the maximum $\ell_{1}$-norm of the rows of $A$, instead of the $\norm{A}_{F}$. 

Our SVE algorithms are based on the relation between quantum walks and singular values \cite{S04, C10}. We show that for every decomposition $A/\mu= P \circ Q$ such the norms of the rows of $P$ and the columns of $Q$ are at most $1$ and $\circ$ denotes the entrywise (Hadamard) product, if we have access to unitaries that prepare normalized quantum states corresponding to the rows and columns of $P, Q$ in time $\widetilde{O}(1)$, there is an SVE algorithm with running time $\widetilde{O}(\mu/\delta)$.  

\begin{result}\label{one} 
[Theorem \ref{sveplus}] 
Let $A \in \R^{m\times n}$ be a matrix and suppose there exist $P, Q \in \R^{m\times n}$ and $\mu >0$ such that
$\norm{p_{i}}_{2} \leq 1 \; \forall i \in [m], \;\norm{q^{j}}_{2} \leq 1 \; \forall j \in [n]$ and
\all{ 
A/\mu = P \circ Q.  
} {hadamard} 
If unitaries $U: \ket{i} \ket{0^{\lceil \log (n+1) \rceil}} \to \ket{i} \ket{\overline{p}_{i}}$ and $V:  \ket{0^{\lceil \log (m+1) \rceil}}  \ket{j} \to \ket{\overline{q}^{j}} \ket{j}$ can be implemented in time $\widetilde{O} (\log (mn))$ then there is an algorithm that performs SVE for 
$A$ in time $\widetilde{O}(\mu/\delta)$. 
\end{result}

This result also explains the connection between a QRAM data structure and an SVE algorithm. The QRAM data structure is designed with a particular factorization $A/\mu = P \circ Q$ in mind, it provides an $\widetilde{O}(1)$ time implementation of the unitaries $U$ and $V$ and therefore an SVE algorithm with running time $\widetilde{O}(\mu/\delta)$ by Result \ref{one}. 

The SVE algorithm in \cite{KP16} falls in this framework, it utilizes an efficient QRAM data structure corresponding to a factorization of form $A/\norm{A}_{F}= P \circ Q$. In this work, we provide efficient QRAM data structures for a more general class of factorizations. Let $s_{p}(A) := \max_{i \in [n]} \norm{a_{i}}_{p}^{p}$ where $\norm{a_{i}}_{p}$ denotes the $\ell_{p}$ norm of the $i$-th row of $A$, then we have the following result. 
\begin{result} \label{two} 
[Theorem \ref{sveplus1}] 
For all $p \in [0,1]$ there are efficient QRAM data structures for storing $A\in \R^{m\times n}$, such that a quantum algorithm with access to such data structures can perform SVE for $A$ with error $\delta$ in time $\widetilde{O}(\mu(A)/\delta)$ for $\mu(A) := \left(\sqrt{s_{2p}(A) s_{2(1-p)}(A^{T} )}\right)$. 
\end{result} 

We next sketch how an SVE algorithm can be used to obtain a linear system solver. Let us first assume that the matrix $A$ is positive semidefinite,  solving a linear system in $A$ reduces to performing SVE, then applying a conditional rotation by an angle proportional to the inverse of each singular value, and erasing the estimate given by the SVE algorithm. 
\al{ 
&\sum_i \beta_i \ket{v_i}\ket{\overline{\sigma}_i} \ket{0} \rightarrow \nl 
&\sum_i \beta_i \ket{v_i}\ket{0} \en{ \frac{\sigma_{min}}{\overline{\sigma}_i}\ket{0} +  \sqrt{1 - \frac{\sigma^2_{min}}{\overline{\sigma}^2_i} } \ket{1} }
} 
Postselecting on the last register being $\ket{0}$, one gets a good approximation to the desired output $\ket{A^{-1}b}= \sum_{i} \frac{\beta_{i}}{\sigma_{i}} \ket{v_{i}}$. 
To complete the analysis of the linear system solver, we boost the success probability for the SVE (which is $O(\frac{1}{\kappa^{2}(A)})$) using amplitude amplification by repeating the procedure $O(\kappa(A))$ times. Further, we show in order to obtain a $\delta$-approximate 
output for the linear system solver the SVE precision should be $O(\delta/\kappa(A))$. A similar analysis applies to matrix multiplication. The same steps can be used for general matrices using the procedure for recovering the sign of the eigenvalues given in \cite{WZP17}. If the matrix $A$ is singular we can achieve the guarantee $\norm{\ket{z} - \ket{A^{+}b}} \leq \delta$ for the Moore-Penrose pseudo-inverse, that 
is we invert only the non zero eigenvalues. 

There are different QRAM data structures for storing $A$ that can achieve $\mu(A)=\norm{A}_{F}$ by \cite{KP16} and $\mu(A) = \sqrt{s_{2p}(A) s_{2(1-p)}(A^{T} )}$ for every value of $p\in [0,1]$ by Result \ref{two}. If we are allowed two passes over the stream of the matrix entries $A$, in the first pass we can compute the minimum of the quantities $\norm{A}_{F}$ and $\sqrt{s_{2p}(A) s_{2(1-p)}(A^{T} )}$ (for an $O(1)$ sized set $\mathcal{P}$ of values for $p$ in $[0,1]$) and determine the optimal data structure which is constructed in the second pass. As the two-pass processing requires linear time, this algorithm is in the QRAM data structure model defined in Definition \ref{d3}. 
\begin{result} \label{r3} 
[Theorem \ref{lqmat}] There exists an algorithm in the QRAM data structure model that given $\ket{b}$, outputs a quantum state $\ket{z}$ with $\norm{\ket{z} - \ket{\mathcal{A}b}} \leq \delta$ for $\mathcal{A} \in \{A, A^{-1}\}$ with running time $\tilde{O}(\frac{\kappa^2(A) \mu(A)}{\delta})$ for $\mu(A) := \min_{p\in \mathcal{P}} \left( \norm{A}_{F}, \sqrt{s_{2p}(A) s_{2(1-p)}(A^{T} )}\right)$ where $\mathcal{P}$ is a set of values in $[0,1]$ with $|\mathcal{P}| = O(1)$. 
\end{result}

If we are restricted to a single pass over the matrix entries, we can construct two data structures corresponding to $\mu(A)=\norm{A}_{F}$ and $\mu(A)=s_{1}(A)$ for $p=1/2$ with a constant overhead, at the end of the pass we will know which data structure achieves the smaller value for $\mu$. These two data structures are expected to be the most useful in practice as they cover the case of low rank matrices and those with bounded $\ell_{1}$ norm. 

Let us now compare our linear system solvers in the QRAM data structure model to the HHL algorithm and its improvements. The HHL algorithm assumes a weaker input model where the transformation $O_{A}: \ket{i,j, 0} \to \ket{i,j, a_{ij}}$ can be carried out efficiently. We note that $O_{A}$ can be implemented efficiently for matrix $A$ stored in the QRAM but also without the QRAM if the matrix $A$ is well structured. However, the matrices arising in machine learning applications are not well structured necessitating the use of the stronger QRAM data structure model for these applications. 

The HHL algorithm has running time $\tilde{O}(s(A)^{2}\kappa(A)^{2}/\delta)$ if the absolute values of the eigenvalues of $A$ lie in the interval $[1/\kappa, 1]$ \cite{H14} and where $s(A)$ is the sparsity (the maximum number of non zero entries in a row), $\kappa(A)$ the condition number for the matrix $A$ and $\delta$ is the approximation error. Subsequent works have improved the running time of quantum linear system solvers to 
$\tilde{O}(s(A)\kappa(A) \log(1/\delta) )$ \cite{A12, CKS15}. In the case of dense matrices, these algorithms require time linear in the dimension of the matrix. 

Quantum linear system solvers in the QRAM data structure model have sub-linear running time even for dense matrices. Instead of sparsity, their running time depends on the parameter $\mu(A)$ defined in Result \ref{r3}. 

Let us first consider dense matrices $A$ such that the absolute values of the eigenvalues lie in the interval $[1/\kappa, 1]$. The sparsity $s(A)=\Omega(n)$ while $\mu(A) \leq \norm{A}_{F} \leq \sqrt{n}$. The linear system solver in QRAM data structure 
model obtains a worst case quadratic speedup over the HHL algorithm using the sparse access model as observed in \cite{WZP17}. As $\norm{A}_{F} = ( \sum_{i} \sigma_{i}^{2} )^{1/2} \leq \sqrt{rk(A)}$, it also achieves an exponential speedup for dense matrices with rank is poly-logarithmic in the matrix dimensions. Let us now see an example where Result \ref{r3} provides an exponential speedup 
over previous algorithms. The example is $A=I+J/n$ where $J$ is the all ones matrix, in this case we have $s(A)=\Omega(n)$, $\norm{A}_F=\Omega(\sqrt{n})$ and $\mu(A) \leq s_{1}(A)=O(1)$, demonstrating the speedup provided for Result \ref{r3}.

Let us next consider the case of symmetric sparse matrices $A$ such that all entries have absolute value bounded by $1$. In this case, we have $s_{1}(A) \leq s(A)$ for all $A$. With this scaling, we can compare with the quantum linear system solver \cite{CKS15} in the HHL input model with running time depends on $s(A)$, the QRAM data structure based approach for this case has running time depending on $s_{1}(A) \leq s(A)$. Lastly, we note that $\mu(A)=\Omega(\sqrt{n})$ for some matrices, meaning that the QRAM data structure based linear system solvers do not provide exponential savings for all matrices.

We also note that a quantum walk that can be used for linear systems with scaling $\mu(A) = \norm{ |A|}_{2}$ is known \cite{M90}, where $|A|$ is obtained by taking absolute values of the entries of $A$. However, an efficient QRAM data structure to implement this walk is not known, such an algorithm would involve updating the singular vectors corresponding to the largest singular value for $|A|$ with cost $\text{polylog}(n)$. 
Finally we note that the running time dependence for the QRAM data-structure based linear system solver has been further improved to linear in $\kappa(A)$ and then the error to $\log (1/\delta)$ in the recent work \cite{CGJ18}.

\subsection{Quantum iterative methods}\label{one3} 
Before explaining the quantum iterative method, we provide some necessary information about classical iterative methods. 

{\it Classical iterative methods for empirical risk minimization.} \label{erm} 
We consider classical iterative methods in the framework of empirical risk minimization \cite{BCN16}, where we are given $m$ examples from a training set $(x_{i}, y_{i})$ with variables $x_{i} \in \R^{n}$ and outcome 
$y_{i} \in \R$. The model is parametrized by $\theta \in \R^{n}$ and is obtained by minimizing the following objective function, 
\als{ 
F(\theta) = \frac{1}{m} \sum_{i \in [m]} \ell( \theta, x_{i}, y_{i}) + R(\theta).  
} 
The loss function $\ell( \theta, x_{i}, y_{i})$ assigns a penalty when the model does not predict the outcome $y_{i}$ well for the example 
$(x_{i}, y_{i})$ while the regularization term $R(\theta)$ penalizes models with high complexity.

The first order method for problems in this framework is called gradient descent. The algorithm starts with  $\theta_0 \in \R^{n}$, and for $\tau$ steps updates $\theta$ via the following update rule: 
\als{ 
\theta_{t+1} = \theta_{t} + \alpha \nabla F(\theta_t) 
} 
In the end, for a large class of loss functions $\ell( \theta, x_{i}, y_{i})$ it outputs $\theta_\tau$ which is guaranteed to be close to the solution for sufficiently large $\tau$. The running time is $\tau C_S$, where $C_S$ is the cost of a single step, in other words it is the cost of the update.

An important subclass of the empirical loss minimization framework is when the gradient is an affine function, as for
weighted least squares and ridge regression problems. More generally, the gradient is an affine function for quadratic optimization problems of the form $\min_{x \in \R^{n}} x^{T} Ax + b^{T}  x +c$ for some $A \in \R^{n\times n}, b\in \R^{n}, c \in \R$. In these cases, the method starts with some $\theta_0$ and for $t\geq 0$ updates it via an update rule of the form, 
\[
\theta_{t+1} = \theta_t + \alpha r_{t} 
\]
where $\alpha$ is a scalar that denotes the step size and $r_t= A\theta_t+b := L(\theta_t)$. It is easy to see that this also implies that $r_{t+1}=S(r_{t}) $ for a linear operator $S$. 
Indeed, 
\al{ 
r_{t+1}&= L(\theta_{t+1}) = L(\theta_t+\alpha r_t)= A(\theta_t+\alpha r_t) + b \nl 
& = r_t+\alpha Ar_t := S(r_t) 
} 

\noindent The final state of a linear update iterative method can hence be written as
\all { 
\theta_\tau &= \theta_0 + \alpha \sum_{t=0}^{\tau - 1} r_t \nl 
&= \theta_0 + \alpha L(\theta_0) + \alpha \sum_{t=1}^{\tau - 1} S^t(r_0).
} {three} 
where $S^t$ is the operator that applies $S$ for $t$ time steps and $S^0$ is the identity operator.  

We are slightly going to change notation in order to make the presentation of the quantum algorithm clearer. We rename $\theta_0$ as $r_0$, which means that  $L(\theta_0)$ is renamed as $L(r_0)$. 
This way, we have
\al{ 
\theta_\tau &= r_0+\alpha L(r_0) + \alpha \sum_{t=1}^{\tau-1} S^t(L(r_0)) \nl 
&=  r_0 + \alpha \sum_{t=1}^{\tau} S^{t-1}(L(r_0)).
} 
Without loss of generality we assume the initial point has unit norm, i.e. $\norm{r_0}=1$. 

{\it A quantum gradient descent algorithm with affine updates.}
Let us make things simpler for this exposition by looking at the case where we take $r_0=0$ and $\alpha=1$, meaning that we want to output the unnormalized state $\ket{\theta_\tau}=\sum_t r_t$. We only make this assumption here for conveying the main ideas and not in the proofs where we address the most general case. 

Imagine that there was a procedure that performs the following mapping perfectly
\[ \ket{t}  \norm{\theta_t} \ket{\theta_t}  \rightarrow \ket{t+1} \norm{\theta_{t+1}} \ket{\theta_{t+1}}
\]
Then, our task would be easy, since applying this unitary $\tau$ times would provide us with the desired state $\ket{\theta_\tau}$. However, this is not the case. The mapping $\theta_t$ to $\theta_{t+1}$ is not even a unitary transformation, and the norm of $\theta_{t+1}$ can be larger than the one of $\theta_t$. Even so, imagine one could in fact perform this mapping with some ``probability" (meaning mapping $\theta_t$ to some state $(\beta \norm{\theta_{t+1}} \ket{\theta_{t+1}}\ket{0} + \sqrt{1-\beta^2}\ket{G}\ket{1})$, for some garbage state $G$). The main problem is that one cannot amplify this amplitude, since the state $\ket{\theta_{t+1}}$ is unknown, being the intermediate step of the iterative method, and in the quantum case we only have a single copy of this state. Hence, the issue with the iterative method is that one needs to perform $\tau$ sequential steps, where each one may have some constant probability of success without the possibility of amplifying this probability. The probability of getting the desired final state is proportional to the product of the success probabilities for each step, which drops exponentially with the number of steps $\tau$. This is also the reason previous attempts for a quantum gradient descent algorithm break down after a logarithmic number of steps \cite{RSPS16}. 

Here we manage to overcome this obstacle in the following way. 
The first idea is to deal with the vectors $r_t$ instead of the $\theta_t$'s, since in this case, we know that the norm of $r_{t+1}$ is smaller than the norm of $r_t$. Our goal would be to find a unitary mapping that, in some sense, maps $r_t$ to $r_{t+1}$. Again, there is the problem that the norms are not equal, but in this case, since the norm of $r_{t+1}$ is smaller, we can make it into a unitary mapping by adding some garbage state. Indeed, we define the quantum step of the quantum iterative method via the following unitary 
\[
\ket{t} \norm{r_t}\ket{r_t}  \stackrel{{V}}{\rightarrow}\ket{t+1} ( \norm{r_{t+1}} \ket{r_{t+1}}\ket{0} + \ket{G}\ket{1}),
\]
where the norm of the garbage state is such that the norm of the right hand side is equal to $\norm{r_t}$. Note that the above vectors are not unit norm but $V$ is still length preserving.
Since we are dealing with linear updates, the above transformation is a matrix multiplication, and we use the SVE procedure to perform it with high accuracy.

The second idea is noticing that our goal now is not to obtain the final state $r_\tau$, but the sum of all the vectors $\sum_t r_t$. Let us see how to construct this efficiently.
Given a procedure for performing one step of the iterative method as above, we design another procedure $U$ that given as input a time $t$ and the initial state $r_0$ can map $r_0$ to $r_t$. We do this by applying $t$-times the unitary $V$, conditioned on the first register. In other words, we can perform the mapping 
\[
\ket{t,r_0}  \stackrel{U}{\rightarrow}\ket{t} ( \norm{r_t} \ket{r_{t}}\ket{0} +\ket{G}\ket{1}).
\] 
Note that if the cost of $V$ is $C_V$, then the cost of $U$ can be at most $\tau C_V$ by applying $V$ sequentially, where the error also increases by a factor of $\tau$. We will see that in fact for some cases we can implement $C_U$ in time $O(C_V+\log \tau)$.

We are now ready for the last step of the algorithm that consists in starting with a superposition of time steps from 0 to $\tau$ and applying $U$, in order to get a superposition of the form 
\[
\frac{1}{\sqrt{\tau}} \sum_t \ket{t} \ket{r_0}  \rightarrow \frac{1}{\sqrt{\tau}}\sum_t \ket{t}( \norm{r_t} \ket{r_{t}}\ket{0} +\ket{G}\ket{1}).
\] 
Then, we can ``erase'' the time register by performing a Hadamard on the first register and accepting the result when the first register is 0.  In other words, we are having a state of the form
\[
\frac{1}{{\tau}}\sum_t \norm{r_t} \ket{r_{t}}\ket{0} + \ket{G'}\ket{1}
\]
Using Amplitude Amplification, we can get the desired state $\frac{1}{\norm{\theta_\tau}}\sum_t \norm{r_t} \ket{r_{t}}$, in overall time $O(\frac{\tau}{\norm{\theta_\tau}})$ times the cost of applying the unitary $U$, and since in our applications $\norm{\theta_{\tau}}=\Omega(1)$ we get the efficient quantum gradient descent algorithm.
\begin{result}
[Informal version of Theorem \ref{aim}, Proposition \ref{48}] Given a unitary $V$ that approximately applies one step of the iterative method with affine update rules in time $C_V$, there exists a quantum algorithm that performs $\tau$ steps of the iterative method in equation \eqref{three} and outputs a state close to $\theta_\tau$, in time at most $O(\tau C_V)$. 
\end{result}

\noindent Theorem \ref{aim} provides a precise statement, the cost is $O(\tau C_{U})$ for a unitary $U$ with cost at most $O(\tau C_{V})$. Proposition \ref{48} shows that $C_{U}=O( C_V)$ for the iterative method in equation \eqref{three} where the matrix $A$ is fixed. If the matrices used for the iterative method are variable then the running time is $O(\tau^{2} C_V)$, this case occurs for the quantum stochastic gradient descent algorithm. 
For the classical iterative method in equation \eqref{three}, our running time is linear in the number of steps times the cost of taking one step of the iterative method. A single step of the iterative method in the quantum case uses 
the SVE algorithm and its cost can be considerably smaller than that for classical algorithms.

Our algorithm does not create all the intermediate states $\theta_t$, which we do not know how to achieve with non-negligible probability for large $\tau$. Instead, we observe that the final state $\theta_\tau$ is equal to the sum of all the update states $r_t$ and then we try to create the sum of these states. We first go to a superposition of all time steps from $0$ to $\tau$ and then conditioned on the time being $t$ we apply coherently $t$ updates to the initial state $r_0$ in order to create a sort of ``history'' quantum state. This is reminiscent of the ``history" states in Kitaev's QMA-completeness proof for the Local-Hamiltonian problem \cite{KSV02}. Last, erasing the register that keeps the time can be done in time linear in the number of time steps, which is still efficient. History states have been used in the past in many different scenarios, for example, a construction using history states to compose solutions of the HHL algorithm was developed independently for quantum differential equation solvers \cite{BCOW17}. We believe this technique will have further applications in quantum machine learning. 

\subsection{ Applications of quantum gradient descent}\label{one4} 

\noindent {\it Positive semidefinite linear systems.}
The simplest application for quantum gradient descent is an iterative algorithm for solving positive semidefinite linear systems. The iterative algorithm method for this application incurs an extra factor of $\kappa(A)$ compared to a direct method using Result \ref{r3}. It is still interesting to have an alternative method for linear systems even if the asymptotic running time is slightly worse. 

Classically, in machine learning settings with noisy data one uses gradient descent methods for solving linear systems instead of the direct method as it offersa tradeoff between time and accuracy \cite{BCN16}. If the data is noisy, we do not require high accuracy solutions to the linear system, and gradient descent provides the solutions
faster than the direct method. 

For this application, we are given a positive semidefinite matrix $A$ and a vector $b$, output a state close to 
$\ket{A^{-1}b}$. Let $A|b$ denote the matrix with row $b$ added to $A$. We have the following result, 
 \begin{result}  \label{four} 
 [Theorem \ref{qlinsys}] 
Let $A$ be a positive semidefinite matrix and $b$ a vector, there is an iterative QRAM-based quantum algorithm that outputs a state $\ket{z}$ such that $\norm{\ket{z} - \ket{A^{-1}b}} \leq 2\delta$ with expected running time $\tilde{O}(\frac{\kappa(A)^3 \mu(A|b)}{\delta})$.
\end{result}
\noindent We will see in the next application to least squares problems that the quantum iterative method offers additional advantages.

\noindent
{\it Stochastic gradient descent for weighted least squares.}
Our main application of the quantum iterative method is to the weighted least squares problem. Here, we are given a matrix $X$ of examples and a corresponding vector $y$ of labels, as well as a vector $w$ of weights, and the goal is to find $\theta$ that minimizes the 
squared loss $\sum_{i} w_{i} (y_{i} - x_{i}^{T}  \theta)^{2}$. 
The problem has a closed form solution given by 
\[
\theta = (X^{T}  W X)^{-1} X^{T}  W y, 
\]  
so it can also be solved using a direct method. Quantum algorithms for unweighted least squares problems with a polynomial dependence on sparsity in the HHL input model have been described \cite{WBL12}. Other approaches to least squares via column sampling have also been considered \cite{LZ17}. We extend these works in two ways: first, using our improved SVE algorithm we can perform matrix multiplication and inversion efficiently for a larger class of matrices; second, we solve the weighted version of the problem. 

More importantly, we are able to give an iterative 
stochastic gradient method for this problem which has many advantages in practical settings (see for example \cite{BCN16} for a more detailed discussion). The least squares problem is used in practice for regression or data fitting, where in many cases the data matrix is skewed in shape since the number of data points is an order of magnitude larger than the dimension of the data points. In such cases, it is prohibitive to perform classical linear algebra operations using the entire data set, and moreover, due to redundancy in the data, the gradient can be estimated efficiently over small batches. 

For these reasons, the gradient is estimated over randomly sampled batches of the training set, an approach which is called stochastic gradient descent. This way, the stochastic gradient descent avoids having to perform linear algebra operations on large matrices, which would be the case if we were to solve the problem directly or using gradient descent. 
Our quantum iterative method can also be used to perform stochastic gradient descent for the above problems.

There are two significant advantages of the quantum stochastic gradient descent algorithm. First, the stochastic gradient descent algorithm has much reduced quantum memory requirements compared to the HHL algorithm or directly solving the linear system in the QRAM data structure model. Similar to the classical setting, the data is split randomly into batches, and for every step of the iterative method, only one batch needs to be in quantum memory. 
This would be very useful in a situation where the size of QRAMs is limited by the difficulty of maintaining coherent superpositions over a large number of qubits. In such a situation, it may be possible to build a 1Mb QRAM but impossible to combine 10 such devices to obtain a 10Mb QRAM. Due to the difficulties in designing QRAMs we believe that this is a significant advantage offered by the quantum stochastic gradient descent algorithm. 

In order to reduce the slowdown caused by the loading of data into the QRAM, the quantum stochastic gradient descent method can be used with multiple quantum memory devices each with limited capacity. In this setting, the algorithm  would perform a computation using one of the memory devices, while data is loaded concurrently into the others. Such memory architectures can simulate a larger memory using smaller sized devices, 
they are also widely used in classical computing. Again, the actual quantum speedup achieved for this setting would depend on a number of hardware parameters, the advantage of the quantum stochastic gradient descent method is to provide greater flexibility and to enable quantum data processing applications for large sized datasets. Further analyses of such architectures can be interesting future work.

The main technical difference between our algorithms for linear systems and least squares is that in the latter, one needs to perform matrix operations for a matrix that is not a priori stored in memory. More precisely, we have in memory the matrix $X$, the diagonal matrix $W$ and a vector $y$ and we need to perform matrix multiplication with the matrix $(X^{T}  W X)^{-1}$ and also create the vector $X^TWy$. We do this by first using the known weight vector $W$ to update the QRAM data structure then invoking the matrix multiplication procedure with the updated data structure.
\begin{result} \label{five} 
[Theorem \ref{qwlsq}] 
Let $X$ be an arbitrary matrix, $W$ a diagonal matrix and $y$ a vector, there is an iterative QRAM-based quantum algorithm that outputs $\ket{z}$ such that $\norm{\ket{z} - \ket{(X^{T}  W X)^{-1} X^TWy}} \leq 2\delta$ in expected time $\tilde{O}(\frac{\kappa(X^TWX)^3. 
\mu(\sqrt{W} X|y)}{\delta})$.
\end{result} 

\subsection{Related Work and Discussion} 
In this section, we discuss how this work relates to more recent works on quantum linear algebra and provide some perspective on directions for future work. 

The block encoding framework has recently emerged \cite{CGJ18} as a general framework for embedding arbitrary matrices into unitary matrices. 
An efficient block encoding for a matrix $A$ is an efficiently implementable unitary matrix $U$ such that $U= \left ( \begin{matrix} A/\alpha &, . \\ . &, . \end{matrix} \right )$
where $\alpha \geq \norm{A}_{2} $ is a scaling factor. Given efficient block encodings for input matrices, there are quantum linear system solver with running time $\widetilde{O}(\mu \kappa \log (1/\epsilon)$ \cite{CGJ18}. Efficient block encodings are also required for quantum singular value transformation \cite{GLSW18} where a function of the singular values is to be applied to the amplitudes. 

The results in this paper can be viewed as providing a family of algorithms (parametrized by $\mu$) for creating efficient block encodings for arbitrary matrices in the QRAM data structure model. They can be used to construct the block encodings required for the state of the art quantum linear system solvers \cite{CGJ18, GLSW18} and therefore occur 
as subroutines in a variety of applications of quantum linear algebra.

A large class of quantum machine learning algorithms involving low-rank input matrices have been dequantized following Tang's breakthrough work on quantum-inspired recommendation systems \cite{T18}. The dequantization methods are based on low-rank decomposition are not applicable to the quantum linear systems for sparse matrices or those with bounded $\ell_{1}$ norm. Quantum machine learning algorithms remain interesting in spite of the dequantization results due to the large polynomial overheads  
of the dequantized results  over the quantum algorithm. For example, for recommendation systems with $m$ users, $n$ products, and $k$ "types of users", the quantum algorithm has running time $\widetilde{O}(\frac{k^{1/2}}{\epsilon} \text{polylog}(mn))$ compared to the dequantized running time $\widetilde{O}(\frac{k^{12}}{\epsilon^{12}} \text{polylog}(mn))$. 
In fact the dequantized algorithm is also much slower than the classical algorithm that has running time $\widetilde{O}(mk)$ on instance sizes of practical interest.
It remains an open question to obtain classical algorithms with smaller polynomial overheads in $k$ and $1/\epsilon$. 

Iterative methods are ubiquitous in classical optimization, however developing quantum analogs of these iterative methods is significantly more challenging. The iterative method that we provide here are able to achieve a polynomial scaling in the number of iterative steps for the restricted class of functions where the gradient is an affine function. Finding other frameworks for quantum iterative methods remains a direction for future work. More recently, we presented a quantum interior point method for linear and semidefinite programs \cite{KP18} which is an example of a second order quantum iterative method.

Finding further applications of the quantum stochastic gradient descent algorithm is another direction for future work. This algorithm can be particularly useful in settings where the linear system matrix is not available a priori but only revealed partially as the algorithm makes progress. Such settings are common for dynamic programming type algorithms in reinforcement learning \cite{SB98}. They arise for instance for the problem of playing video games where the states of the game and the actions are revealed to the algorithm only when the algorithm moves ahead into the game and are not known in the very beginning of the game. Classical stochastic gradient descent can be used for optimizing the strategy in such a game. 
This offers an example where a direct application of the HHL algorithm would not suffice as the entire matrix is not known in advance, but the quantum stochastic gradient descent algorithm would enable such an application since at every step the gradient would be estimated via the information that has already been revealed. 

\section{Quantum Preliminaries}

\subsection{Quantum Algorithms}
We will use phase estimation and variants of amplitude amplification that we recall below. The time required to implement a unitary operator $U$ will be denoted by $T(U)$. 
\begin{theorem} \label{pest} 
[Consistent phase estimation,  \cite{T13}] Let $U$ be a unitary operator with eigenvectors $\ket{v_j}$ and eigenvalues $e^{ \iota \theta_{j}}$ for $\theta_{j} \in [-\pi, \pi]$. There exists a quantum algorithm with running time $O( T(U) \frac{\log n }{\epsilon} )$ that transforms $\ket{\phi} = \sum_{j \in [n]} \alpha_{j} \ket{v_{j}} \to \sum_{j \in [n]} \alpha_{j} \ket{v_{j}}\ket{ \overline{\theta_{j}} }$ where $\overline{\theta_{j}}$ is an estimate such that $|\overline{\theta_{j}} - \theta_{j} | \leq \epsilon$ for all $j\in [n]$ with probability at least $1-1/\emph{poly}(n)$. 
\end{theorem}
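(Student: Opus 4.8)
The plan is to reconstruct the standard Kitaev phase-estimation circuit and then boost its success probability coherently from a constant to $1-1/\text{poly}(n)$ by parallel repetition followed by a median computation. Throughout I work with the rescaled phases $\phi_j := \theta_j/(2\pi) \in [-1/2,1/2)$, so that estimating $\theta_j$ to additive error $\epsilon$ is equivalent to estimating $\phi_j$ to error $\epsilon/(2\pi)$; the output register is then relabelled by $\overline{\theta}_j = 2\pi \overline{\phi}_j$ at the end.

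First I would adjoin a control register of $t$ qubits initialised to the uniform superposition $\frac{1}{\sqrt{2^t}}\sum_{k=0}^{2^t-1}\ket{k}$ via Hadamards, tensored with $\ket{\phi}=\sum_j \alpha_j\ket{v_j}$. Applying the controlled map $\ket{k}\ket{v_j}\mapsto \ket{k}U^k\ket{v_j}=e^{\iota k\theta_j}\ket{k}\ket{v_j}$, implemented through the binary decomposition $U^{2^0},U^{2^1},\dots,U^{2^{t-1}}$ (each controlled on a distinct qubit), produces
\[
\frac{1}{\sqrt{2^t}}\sum_j \alpha_j \sum_{k=0}^{2^t-1} e^{\iota k\theta_j}\ket{k}\ket{v_j}.
\]
The inverse quantum Fourier transform applied to the control register then, for each $j$, concentrates the amplitude around the integer $m_j$ nearest to $2^t\phi_j$. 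The technical core is the error analysis of this step: a geometric-sum estimate shows that reading off the nearest $t$-bit value returns $m_j/2^t$ within $2^{-t}$ of $\phi_j$ with probability at least $4/\pi^2$, and that the probability of landing $\ell$ grid points away decays like $O(1/\ell^2)$. Taking $t=\lceil\log(1/\epsilon)\rceil+O(1)$ thus guarantees $|\overline{\theta}_j-\theta_j|\le\epsilon$ with some constant probability $p>1/2$, using $2^t-1=O(1/\epsilon)$ applications of $U$ and an inverse QFT of cost $O(t^2)=O(\log^2(1/\epsilon))$, which is lower order.

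To reach success probability $1-1/\text{poly}(n)$ I would run $R=O(\log n)$ independent copies of this circuit in parallel, writing their estimates into $R$ separate registers, coherently compute the median of these estimates into a fresh output register, and then uncompute the $R$ auxiliary registers by running the copies in reverse. A Chernoff bound shows the median deviates from $\theta_j$ by more than $\epsilon$ only with probability $2^{-\Omega(R)}$; since the total bad amplitude is the convex combination $\sum_j|\alpha_j|^2\cdot(\text{fail}_j)\le\max_j(\text{fail}_j)=2^{-\Omega(R)}$, no union-bound loss over the $n$ branches occurs, so $R=O(\log n)$ already yields an output state $1/\text{poly}(n)$-close to $\sum_j\alpha_j\ket{v_j}\ket{\overline{\theta}_j}$. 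The total number of calls to $U$ is $R\cdot O(1/\epsilon)=O(\log n/\epsilon)$, and the median arithmetic and QFTs contribute only $\text{poly}(\log n,\log(1/\epsilon))$ gates not involving $U$, giving the stated running time $O(T(U)\log n/\epsilon)$.

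The hard part will be the coherent median step rather than the single-run analysis: one must verify that uncomputing the $R$ intermediate estimate registers leaves no residual entanglement that would spoil the clean factored form $\ket{v_j}\ket{\overline{\theta}_j}$, and that the per-run success probability can indeed be pushed strictly above $1/2$ (so that the median concentrates) by adding only $O(1)$ extra control qubits, as the $O(1/\ell^2)$ tail is exactly strong enough to make this constant controllable. Once these two points are in place, the accounting above delivers the theorem.
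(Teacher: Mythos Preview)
The paper does not prove this theorem: it is stated in the Preliminaries (Section~2.2) as a known result with a citation to \cite{K95} and no proof is given. There is therefore nothing in the paper to compare your argument against.

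That said, your reconstruction is the standard one and is correct in outline. One remark on the point you flag as ``the hard part'': you cannot in fact cleanly uncompute the $R$ intermediate estimate registers after writing the median into a fresh register, because the median register remains entangled with the $R$ estimate registers, so running the $R$ phase-estimation circuits in reverse will not return those registers to $\ket{0}$. The usual resolution is simply not to uncompute: keep the $R$ estimate registers as garbage and observe that the resulting global state is $1/\text{poly}(n)$-close in $\ell_2$ (hence trace) distance to the ideal state $\sum_j \alpha_j \ket{v_j}\ket{\overline{\theta}_j}\ket{\text{junk}_j}$, which is how the ``with probability at least $1-1/\text{poly}(n)$'' clause in the theorem statement is to be read. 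Your accounting of the $O(T(U)\log n/\epsilon)$ cost is correct.
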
 
\noindent The difference with phase estimation result in \cite{K95} is that the usual phase estimation does not output a fixed estimate $\overline{\theta_{j}}$, the estimate can in some cases assume one 
of two possible values, while here we require that the estimate be fixed with high probability. The estimate however is not deterministic, it depends on the random shifts used in the consistent phase estimation 
algorithm. 
An alternate consistent eigenvalue estimation procedure is 
sketched in \cite{A12, KLLP18}. We also need a version of amplitude amplification and estimation \cite{BHMT00}, 
\begin{theorem} \label{tampa} 
[Amplitude amplification and estimation \cite{BHMT00}] If there is unitary operator $U$ such that $U\ket{0}^{l}= \ket{\phi} = \sin(\theta) \ket{x, 0} + \cos(\theta) \ket{G, 0^\bot}$ where 
$\ket{G}$ is an arbitrary garbage state and $\ket{0^\bot}$ is any state orthogonal to $\ket{0}$ then  $\sin^{2}(\theta)$ can be estimated within error $(1 \pm \epsilon)$ in time $O(\frac{T(U)}{\epsilon \sin(\theta)})$ and 
$\ket{x}$ can be generated in expected time $O(\frac{T(U)}{\sin (\theta)})$. 
\end{theorem}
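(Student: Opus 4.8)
The plan is to reduce both claims to the behaviour of a single Grover-type operator and then feed it into phase estimation (\thmref{pest}). Write $\ket{\phi}=U\ket{0}^{l}=\sin(\theta)\ket{x,0}+\cos(\theta)\ket{G,0^{\bot}}$ and abbreviate the ``good'' and ``bad'' parts as $\ket{g}=\ket{x,0}$ and $\ket{b}=\ket{G,0^{\bot}}$; these two orthonormal vectors span a real two-dimensional plane containing $\ket{\phi}$. I would introduce the reflections $S_{\chi}=I-2\Pi_{g}$ (the sign flip on states whose flag register is $\ket{0}$) and $S_{0}=I-2\ket{0^{l}}\bra{0^{l}}$, and set $Q=-U S_{0} U^{-1} S_{\chi}$. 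The first step is the standard computation showing that $Q$ leaves $\mathrm{span}\{\ket{g},\ket{b}\}$ invariant and acts on it as a rotation by $2\theta$; equivalently, on this plane $Q$ has eigenvalues $e^{\pm 2i\theta}$ with eigenvectors $\ket{\psi_{\pm}}=\tfrac{1}{\sqrt 2}(\ket{g}\pm i\ket{b})$, and $\ket{\phi}$ is the equal-weight superposition $\ket{\phi}=\tfrac{-i}{\sqrt2}(e^{i\theta}\ket{\psi_{+}}-e^{-i\theta}\ket{\psi_{-}})$. Each application of $Q$ costs one $U$ and one $U^{-1}$, i.e. $O(T(U))$.

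For generating $\ket{x}$ I would use that $Q^{k}\ket{\phi}=\sin((2k+1)\theta)\ket{g}+\cos((2k+1)\theta)\ket{b}$, so the good amplitude is maximized by taking $k$ near $\tfrac{\pi}{4\theta}-\tfrac12$. Since $\theta$ is unknown, I would invoke the usual doubling schedule over candidate iteration counts $k=1,2,4,\ldots$: run $Q^{k}$ on $\ket{\phi}$, measure the flag register, and stop when it reads $\ket{0}$, in which case the surviving state is $\ket{x}$. A short analysis shows this halts after an expected number of Grover iterations that is $O(1/\theta)=O(1/\sin\theta)$, so the expected running time is $O(T(U)/\sin\theta)$.

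For estimating $\sin^{2}\theta$ I would run phase estimation (\thmref{pest}) on $Q$ with input $\ket{\phi}$. Because $\ket{\phi}$ splits evenly across the eigenvectors $\ket{\psi_{\pm}}$ with eigenphases $\pm2\theta$, phase estimation returns an estimate $\overline{\theta}$ of $\theta$ (after folding $\pm2\theta$ into $[0,\pi]$), and running it to additive phase accuracy $\Delta\theta$ costs $O(1/\Delta\theta)$ applications of $Q$. I would then propagate this through $\theta\mapsto\sin^{2}\theta$: since $\tfrac{d}{d\theta}\sin^{2}\theta=\sin 2\theta=2\sin\theta\cos\theta$, an accuracy $\Delta\theta=\Theta(\epsilon\sin\theta)$ gives $|\sin^{2}\overline{\theta}-\sin^{2}\theta|\le \sin(2\theta)\,\Delta\theta=\epsilon\sin^{2}\theta\cos\theta\le\epsilon\sin^{2}\theta$, the desired relative error, at the cost of $O(1/(\epsilon\sin\theta))$ applications of $Q$ and hence total time $O(T(U)/(\epsilon\sin\theta))$.

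The main obstacle is not the operator identity for $Q$, which is routine, but the two bookkeeping arguments. For amplitude amplification the delicate point is handling the unknown $\theta$: one must show the doubling schedule achieves the claimed $O(1/\sin\theta)$ expected cost without overshooting the optimal $k$ and without knowing $\theta$ in advance. For amplitude estimation the delicate point is the nonlinear error propagation through $\sin^{2}$ near $\theta=0$: one must verify that $\Theta(\epsilon\sin\theta)$ phase accuracy is simultaneously enough to force relative error $\epsilon$ in $\sin^{2}\theta$, and that the phase-register resolution together with the folding of $\pm2\theta$ into a single value introduces no additional error terms that spoil the bound. I would treat these two estimates, rather than the algebra of $Q$, as the crux of the proof.
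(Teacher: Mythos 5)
This statement is imported verbatim from \cite{BHMT00}; the paper gives no proof of it, so there is nothing internal to compare against --- your proposal is, in effect, a reconstruction of the original BHMT argument, and it is the correct one: the rotation structure of $Q=-US_0U^{-1}S_\chi$ with eigenvalues $e^{\pm 2i\theta}$, the exponential-search amplification for unknown $\theta$, and phase estimation on $Q$ followed by error propagation through $\sin^2$ are exactly the three ingredients of Theorems 3 and 12 in that reference. One technical caveat on the amplification half: a \emph{deterministic} doubling schedule $k=1,2,4,\ldots$ can repeatedly land on iteration counts where $\sin((2k+1)\theta)$ is small for adversarial $\theta$; BHMT's QSearch instead draws $k$ uniformly at random from $\{0,\ldots,M-1\}$ with $M$ growing geometrically, which guarantees average success probability $\geq 1/4$ once $M \gtrsim 1/\theta$ and yields the expected $O(1/\sin\theta)$ bound --- you correctly identified this bookkeeping as the crux, but the fix is randomization of $k$, not just the schedule. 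Your error-propagation step for estimation is sound, matching BHMT's bound $|\tilde a - a| \leq 2\pi\sqrt{a(1-a)}/M + \pi^2/M^2$ with $a=\sin^2\theta$ and $M=O(1/(\epsilon\sin\theta))$, and the folding of $\pm 2\theta$ is harmless since both eigenphases give the same value of $\sin^2\theta$.
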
 
\noindent Note that for this statement, if the amplitude amplification succeeds then we measure $\ket{0}$ and obtain the exact state $\ket{x}$, there is no approximation error involved. 
Last we provide a simple claim that shows that if two unnormalized vectors $\phi, \tilde{\phi}$ are close to each other, then their normalized versions 
$\ket{\phi}, \ket{\tilde{\phi}}$ are also relatively close.

\begin{claim}\label{unnorm}
Let $\theta$ be the angle between vectors $\phi, \tilde{\phi}$ and assume that $\theta< \pi/2$. Then, $\norm{ \phi -  \tilde{\phi} } \leq \epsilon$ implies $\norm{ \ket{\phi} - \ket{ \tilde{\phi} }} \leq \frac{ \sqrt{2} \epsilon }{ \norm{\phi} }.$
\end{claim}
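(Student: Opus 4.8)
The plan is to work entirely with the law of cosines, translating both quantities into expressions involving $\norm{\phi}$, $\norm{\tilde\phi}$, and $\cos\theta$, and then reduce the claim to a single scalar inequality. First I would record the exact value of the left-hand side. Since $\ket{\phi}=\phi/\norm{\phi}$ and $\ket{\tilde\phi}=\tilde\phi/\norm{\tilde\phi}$ are unit vectors making angle $\theta$, we have
\[
\norm{\ket{\phi}-\ket{\tilde\phi}}^{2} = 2 - 2\cos\theta = 2(1-\cos\theta).
\]
So the claimed bound $\norm{\ket{\phi}-\ket{\tilde\phi}}\le \sqrt{2}\,\epsilon/\norm{\phi}$ is equivalent, after squaring, to $\norm{\phi}^{2}(1-\cos\theta)\le \epsilon^{2}$. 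Since by hypothesis $\norm{\phi-\tilde\phi}\le\epsilon$, it suffices to prove the sharper statement
\[
\norm{\phi}^{2}(1-\cos\theta) \;\le\; \norm{\phi-\tilde\phi}^{2}.
\]

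Next I would expand the right-hand side by the law of cosines, $\norm{\phi-\tilde\phi}^{2} = \norm{\phi}^{2}+\norm{\tilde\phi}^{2}-2\norm{\phi}\norm{\tilde\phi}\cos\theta$. Writing $a=\norm{\phi}$, $b=\norm{\tilde\phi}$, and $c=\cos\theta$, the desired inequality becomes $a^{2}(1-c)\le a^{2}+b^{2}-2abc$, i.e.
\[
b^{2} - 2ac\,b + a^{2}c \;\ge\; 0.
\]
The key step is to view the left-hand side as a quadratic in $b$ with positive leading coefficient and compute its discriminant, which equals $4a^{2}c^{2}-4a^{2}c = 4a^{2}c(c-1)$. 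Since $\theta<\pi/2$ forces $c=\cos\theta\in(0,1]$, we have $c-1\le 0$ and hence the discriminant is $\le 0$; a nonnegative-leading quadratic with nonpositive discriminant is nonnegative everywhere, so the inequality holds for every value of $b$. This is precisely where the assumption $\theta<\pi/2$ is essential: for $c<0$ the discriminant becomes positive and the inequality can fail, so the hypothesis cannot be dropped.

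Combining the two displays gives $\norm{\phi}^{2}(1-\cos\theta)\le\norm{\phi-\tilde\phi}^{2}\le\epsilon^{2}$, hence $1-\cos\theta\le \epsilon^{2}/\norm{\phi}^{2}$, and substituting back into the expression for the left-hand side yields $\norm{\ket{\phi}-\ket{\tilde\phi}}^{2}\le 2\epsilon^{2}/\norm{\phi}^{2}$; taking square roots finishes the proof. I do not expect any real obstacle here: the only subtlety is ensuring the scalar inequality is proved in full generality over $b=\norm{\tilde\phi}$ rather than for a convenient special case, and the discriminant computation makes the role of $\theta<\pi/2$ transparent. An alternative to the discriminant argument would be to complete the square as $(b-ac)^{2}+a^{2}c(1-c)\ge 0$, which again is manifestly nonnegative exactly when $c\in[0,1]$; either route is routine once the reduction to the scalar inequality is in place.
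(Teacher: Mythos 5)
Your proof is correct, but it takes a genuinely different route from the paper's. The paper argues trigonometrically: it writes the distance between the normalized states exactly as $2\sin(\theta/2)$, bounds it via the half-angle comparison $(2\sin(\theta/2))^2 \leq 2\sin^2(\theta)$ --- which is where the hypothesis $\theta < \pi/2$ enters, since this is equivalent to $2\cos^2(\theta/2)\geq 1$ --- and then uses the geometric projection bound $\norm{\phi - \tilde{\phi}} \geq \norm{\phi}\sin(\theta)$ (the component of $\phi - \tilde{\phi}$ orthogonal to $\tilde{\phi}$ has length $\norm{\phi}\sin\theta$) to conclude. You instead keep the exact chord length $2(1-\cos\theta)$, reduce the claim to the scalar inequality $\norm{\phi}^2(1-\cos\theta) \leq \norm{\phi-\tilde{\phi}}^2$, expand by the law of cosines, and dispatch the resulting quadratic $b^2 - 2acb + a^2c \geq 0$ in $b = \norm{\tilde{\phi}}$ by a discriminant (or completing-the-square) computation. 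Both are elementary and yield the same constant $\sqrt{2}$, but your reduction is tighter: the paper's detour through $\sin^2\theta = (1-\cos\theta)(1+\cos\theta)$ gives away a factor of $1+\cos\theta$ before recovering it in the half-angle step, whereas your intermediate inequality is, given the exact chord formula, equivalent to the claim with $\epsilon = \norm{\phi-\tilde{\phi}}$. A further small dividend of your approach is that the role of the angle hypothesis is made sharp: the sign of $c = \cos\theta$ controls the sign of the discriminant $4a^2c(c-1)$, so you see exactly that the bound can fail for obtuse angles, a point the paper's proof leaves implicit (indeed, the paper attaches the assumption $\theta < \pi/2$ to the projection bound, which actually holds for all $\theta \in [0,\pi]$; the assumption is really consumed in the half-angle step).
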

\begin{proof}
 
We bound the $\ell_{2}$ distance $\norm{ \ket{\phi} - \ket{ \tilde{\phi}}}$ using the following argument. Let $\theta$ be the angle between $\phi, \tilde{\phi}$. For the unnormalized vectors we have $\norm{ \phi -\tilde{\phi} } \leq \epsilon$, and assuming that $\theta< \pi/2$ we have $\epsilon  \geq \norm{\phi} \sin (\theta)$.  The distance between the normalized states can thus be bounded as, 
\all{ 
\norm{ \ket{\phi} - \ket{ \tilde{\phi} }}^{2} = (2 \sin (\theta/2))^{2} \leq 2\sin^{2} (\theta) \leq \frac{2  \epsilon^{2}} {  \norm{\phi}^{2}} 
} {norm0} 
\end{proof}

\section{The Quantum Gradient Descent algorithm}
In this section, we define a quantum step of the quantum gradient descent in the case of a linear update rule and then describe the full quantum procedure that performs the quantum iterative method. 

\subsection{The quantum step}

We assume that the classical iterative method has an update rule of the form, 

\[
\theta_\tau =  r_0 + \alpha \sum_{t=1}^{\tau} S^{t-1}(L(r_0)) = r_{0} + \alpha \sum_{t \in [\tau]} r_{t} 
\]
 for an affine operator $L$, a linear, contracting operator $S$, and an arbitrary initial vector $r_0$ with $\norm{r_0}=1$. This is the case, for example, for solving linear systems or least squares. 

First, we define the notion of an approximate quantum step of the quantum iterative method. Let us denote by $\tau$ the number of steps of the classical iterative algorithm, and let $\tau +1= 2^\ell$ (if not just increase one can $\tau$ to the next power of 2). A contracting linear operator $L$ is an operator such that $\norm{Lx} \leq \norm{x}$ for all $x$.

\begin{defn}\label{31} 
Let $L$ be an affine operator, $S$ be a linear, contracting operator, $r_0$ be an initial vector and $r_{t} =S^{t-1}(L(r_0))$ be the residual vectors. 
An $(\epsilon, \delta)$-approximate quantum step corresponding to a classical iterative method with update rule $\theta_\tau =  r_0 + \alpha \sum_{t=1}^{\tau} S^{t-1}(L(r_0))$ 
 is a transformation $V$ such that for any $1 \leq t \leq \tau-1$,
\begin{align*}
V & : \ket{0} \ket{r_0} \ket{0}\rightarrow \ket{1} \left( \alpha \norm{\tilde{L}(r_0)} \ket{\tilde{L}(r_0)} \ket{0} + \ket{G_1}\ket{1} \right) \\
&: \ket{t} \norm{r_t} \ket{r_t} \ket{0} \rightarrow  \\
&\ket{t+1} \left( \norm{\tilde{S}(r_{t})}  \ket{\tilde{S}(r_{t})}\ket{0} + \ket{G_{t+1}}\ket{1} \right),
\end{align*}
where $\ket{G_{t+1}}$, for $0\leq t \leq \tau -1$ is an unnormalised garbage state, 
$\tilde{L}$ is an approximation to $L:r_0 \rightarrow r_{1}$, 
$\tilde{S}$ is an approximation to $S:r_t \rightarrow r_{t+1}$, 
such that $\norm{ L(r_0) - \tilde{L}(r_0)} \leq \epsilon $ with probability $\geq 1-\delta$,  and for each $t\in [\tau-1]$ we have $ \norm { S(r_{t}) -  \tilde{S}(r_{t}) }  \leq \epsilon $ with probability $\geq 1-\delta$. 
\end{defn}

Notice that $\norm{\tilde{L}(r_0)}$ might be larger than 1, but by taking $\alpha$ a small constant we have $\alpha \norm{\tilde{L}(r_0)}  \leq 1$. The way we defined $V$, it is norm preserving, but the vectors that we wrote do not have unit norm. We can, of course, normalize them by dividing both sides by $\norm{r_t}$. We prefer this notation in order to resemble more the classical mapping of the unnormalized vectors $r_t \rightarrow r_{t+1}$.

We also note that the transformation $V$ will be eventually implemented using a singular value estimation procedure what succeeds with probability $1-1/\text{poly}(n)$, thus $\delta=1/\text{poly}(n)$ is the probability with which the estimates 
for $L(r_0)$ and $S(r_{t})$ are correct. We can define the following procedure $U$ similar to the ideal case.

\begin{claim}\label{U}
Given access to an $(\epsilon, \delta)$-approximate quantum step procedure $V$ with cost $C_V$, there exists an $(\epsilon, \delta)$-approximate quantum multistep procedure $U$ with cost at most $O(\tau C_V)$, such that for any $t \in [\tau]$,
\begin{align*}
& U  : \ket{0} \ket{0} \ket{0} \ket{r_0} \ket{0} \rightarrow \ket{0} \ket{0} \ket{0} \ket{r_0} \ket{0} \\
&\;\;\;: \ket{t} \ket{0} \ket{0} \ket{r_0} \ket{0} \rightarrow \\  
&\ket{t} \ket{0} \left( \alpha \norm{\tilde{S}^{t-1}(\tilde{L}(r_{0}))}  \ket{t} \ket{\tilde{S}^{t-1}(\tilde{L}(r_{0}))}\ket{0} + \ket{G'_t}\ket{1} \right),
\end{align*}
where $\ket{G'_t}$ is an unnormalised garbage state, and with probability at least $(1-t\delta)$ it holds that  $ \norm{  S^{t-1}(L(r_0)) - \tilde{S}^{t-1}(\tilde{L}(r_{0})) } \leq t \epsilon$, 
where $S, L$ are as in Definition \ref{31}. 
\end{claim}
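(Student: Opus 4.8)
The plan is to realize $U$ as a gated iteration of $V$: for a target time held in the first register, we apply $V$ a total of $\tau$ times but arrange that exactly $t$ of the applications are ``active'' on the branch whose first register reads $\ket{t}$. Concretely, I would run a loop over $j=1,2,\dots,\tau$ and, at round $j$, apply $V$ (acting on the counter register, the vector register, and a \emph{fresh} flag qubit) controlled on the predicate $t\ge j$ evaluated on the first register. Since $V$ internally increments the counter and switches between applying $\tilde L$ (introducing the factor $\alpha$, when the counter reads $0$) and $\tilde S$ (when the counter reads $\ge 1$), after these $\tau$ rounds the counter on every branch has advanced to $\ket{t}$ and, on the branch where every flag came out $\ket{0}$, the vector register holds $\tilde S^{t-1}(\tilde L(r_0))$ with amplitude $\alpha\norm{\tilde S^{t-1}(\tilde L(r_0))}$ by linearity of $\tilde S$, exactly as required. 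The scratch register is used by the comparator computing $t\ge j$ and returned to $\ket{0}$; for $t=0$ no application is ever activated, giving the first line of the map.

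The first thing to check is that this is an honest unitary and that ``garbage'' produced at some round cannot leak back into the good branch; this is where two design choices matter. First, the control predicate $t\ge j$ is read off the first register, which $V$ never touches, so each gated round $W_j=V\,\Pi_{\ge j}+\Pi_{<j}$ is a controlled-unitary whose control qubits are disjoint from the qubits $V$ acts on; a short computation ($W_j^\dagger W_j=\Pi_{\ge j}V^\dagger V\,\Pi_{\ge j}+\Pi_{<j}=\Pi_{\ge j}+\Pi_{<j}=I$, the cross terms vanishing because $\Pi_{\ge j},\Pi_{<j}$ are orthogonal and commute with $V$) confirms each round is unitary. Second, using a fresh flag qubit each round guarantees $V$ always sees a clean $\ket{0}$ in its flag slot, so its specified action applies; a branch that acquires flag $\ket{1}$ at round $k$ is never acted on in that coordinate again, so it stays flagged as garbage permanently and the all-flags-$\ket{0}$ branch is exactly the intended trajectory. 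At the end I would compute the OR of the $\tau$ flags into one output qubit, so $\ket{0}$ marks the good branch and all remaining amplitude, together with the individual flags, is absorbed into $\ket{G'_t}\ket{1}$.

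For the running time, each of the $\tau$ rounds costs $C_V$ plus the $O(\log\tau)$ overhead of the comparator, and the final OR is $O(\tau)$, so $U$ runs in time $O(\tau(C_V+\log\tau))=O(\tau C_V)$. It then remains to establish the error guarantee. Set $a_1=L(r_0)$, $a_{k+1}=S(a_k)$ and $\tilde a_1=\tilde L(r_0)$, $\tilde a_{k+1}=\tilde S(\tilde a_k)$, so the exact and approximate good-branch vectors are $a_t=S^{t-1}(L(r_0))$ and $\tilde a_t=\tilde S^{t-1}(\tilde L(r_0))$. By \defref{31}, with probability $\ge 1-t\delta$ (union bound over the $t$ active applications) all the per-step estimates are accurate, i.e. $\norm{L(r_0)-\tilde L(r_0)}\le\epsilon$ and $\norm{S(r)-\tilde S(r)}\le\epsilon$ at each relevant vector. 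Conditioned on this event I argue by induction: the base case is $\norm{a_1-\tilde a_1}\le\epsilon$, and for the step
\[
\norm{a_{k+1}-\tilde a_{k+1}}\le \norm{S(a_k)-S(\tilde a_k)}+\norm{S(\tilde a_k)-\tilde S(\tilde a_k)}\le \norm{a_k-\tilde a_k}+\epsilon,
\]
where the first term uses linearity and contractivity of $S$ (so $\norm{S}\le 1$) and the second uses the per-step guarantee at the point $\tilde a_k$. Unrolling gives $\norm{a_t-\tilde a_t}\le t\epsilon$, the claimed bound.

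The main obstacle is conceptual rather than computational: arranging the controlled iteration so that it is simultaneously a single honest unitary valid on the whole superposition over target times $t$, and immune to garbage re-entering the good subspace. This is precisely what controlling on the untouched time register via an external round counter, together with a fresh flag per round, buys; the contractivity of $S$ is then what keeps the accumulated error linear in $t$ rather than exponential.
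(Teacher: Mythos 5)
Your proposal is correct and follows essentially the same route as the paper: a gated iteration of $V$ conditioned on the time register (the paper packages this as an operator $W$ applied $t$ times controlled on the first register), followed by the identical contractivity-plus-triangle-inequality induction and union bound yielding error $t\epsilon$ with probability at least $1-t\delta$. The only real difference is bookkeeping for garbage isolation: where you spend a fresh flag qubit per round so that once-failed branches remain orthogonal to the all-flags-$\ket{0}$ branch (and take a final OR), the paper copies the flag into a single control qubit and defines $W$ to act as the identity whenever that control is $\ket{1}$, achieving the same protection against $V$'s unspecified action on failed branches with $O(1)$ ancillas.
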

\noindent Note that out of the five registers used in this claim, registers 1 and 3 store the time step, 4 stores the quantum state for the iterative method while 2 and 5 are control qubits or flags. Moreover, we take into account the fact that the error in $U$ increases by a factor of $t$.

\begin{proof} 
We define the operator $W$ on four registers, such that if the control register is 0, then it applies a $V$ on the other three registers and then a CNOT to copy the last register into the control register. If the control register is 1, then it does nothing. Namely
\begin{align*}
W & : \ket{0} \ket{0} \ket{r_0} \ket{0} \rightarrow \\
&\ket{0} \ket{1} \alpha \norm{\tilde{L}(r_0)} \ket{\tilde{L}(r_0)} \ket{0} + \ket{1}\ket{1}\ket{G_1}\ket{1}, \\
& :  \ket{0} \ket{t} \norm{r_t} \ket{r_t} \ket{0} \rightarrow \\
&\ket{0} \ket{t+1}  \norm{\tilde{S}(r_{t})} \ket{\tilde{S}(r_{t})}\ket{0} +  \ket{1} \ket{t+1} \ket{G_{t+1}}\ket{1}\\
&  \;\;\;\forall  t \in [1,\tau-1], \\
& :  \ket{1} \ket{t} \ket{b} \ket{1} \; \rightarrow \ket{1} \ket{t} \ket{b} \ket{1} \\
& \;\;\;  \forall t \in [0,\tau-1]. 
\end{align*}

We define the following procedure $U$ that acts as identity for $t=0$ and for $t \in [\tau-1]$ it does the following:
\all{ 
& \ket{t} \ket{0} \ket{0}\ket{r_0} \ket{0}  \stackrel{W^t}{\rightarrow}  \ket{t} W^t (\ket{0} \ket{0} \ket{r_0}  \ket{0}) = \nl 
& \ket{t}   \ket{0} \ket{t} \alpha \norm{\tilde{S}^{t-1}(\tilde{L}(r_{0}))}  \ket{\tilde{S}^{t-1}(\tilde{L}(r_{0}))}\ket{0} + \nl 
& \ket{t} \ket{1}  \sum_{i=1}^{T}  \ket{i} \ket{G_i} \ket{1}   \stackrel{CNOT_{5,2}}{\rightarrow} \nl
& \ket{t} \ket{0} \left( \alpha  \norm{\tilde{S}^{t-1}(\tilde{L}(r_{0}))}  \ket{t} \ket{\tilde{S}^{t-1}(\tilde{L}(r_{0}))} \ket{0} +  \ket{G'_t}\ket{1} \right)\notag 
} {eqn11} 
The equality in the above equation follows from the definition of $W$. The cost of applying $W$ is cost $C_{V}$, thus the entire procedure can be performed in time $O(\tau C_{V})$. We prove the properties by induction on $t$. For $t=1$ we use the definition of the quantum step $V$ and the property holds. Assume it holds for $t-1$, i.e. with probability $1-(t-1)\delta$ we have 
$\norm{ S^{t-2}(L(r_{0})) -  \tilde{S}^{t-2}(\tilde{L}(r_0)) } \leq (t-1)\epsilon $. 

Then, we have
\als{ 
&\norm{ S^{t-1}(L(r_0)) - \tilde{S}^{t-1}(\tilde{L}(r_{0})) }  \leq \nl 
& \norm{ S(S^{t-2}(L(r_0))) - S( \tilde{S}^{t-2}(\tilde{L}(r_{0}))) } \nl 
 &  +  \norm{ S( \tilde{S}^{t-2}(\tilde{L}(r_{0}))) - \tilde{S}( \tilde{S}^{t-2}(\tilde{L}(r_{0}))) }\nl 
 & \leq   \norm{ S^{t-2}(L(r_0)) - \tilde{S}^{t-2}(\tilde{L}(r_{0})) } \nl
 &  +   \norm{ S( \tilde{S}^{t-2}(\tilde{L}(r_{0}))) - \tilde{S}( \tilde{S}^{t-2}(\tilde{L}(r_{0}))) }
} 
where we used the fact that ${S}$ is contractive.
Also, by definition of the iterative step, 
with probability $(1-\delta)$ we have $ \norm{ S( \tilde{S}^{t-2}(\tilde{L}(r_{0}))) - \tilde{S}( \tilde{S}^{t-2}(\tilde{L}(r_{0}))) } \leq \epsilon $ and with probability $1-(t-1)\delta$, by induction hypothesis, we have $\norm{ S^{t-2}(L(r_{0})) -  \tilde{S}^{t-2}(\tilde{L}(r_0)) } \leq (t-1)\epsilon $. Hence overall, with probability at least $1-t\delta$, we have 
\[
 \norm{ S^{t-1}(L(r_0)) - \tilde{S}^{t-1}(\tilde{L}(r_{0})) }  \leq   t \epsilon .
\]
\end{proof}

\subsection{The Quantum Iterative Method: general case} \label{io} 

The main part of the quantum iterative method is the efficient construction of a unitary $Q$ defined below using the results from Claim \ref{U}. We then use amplitude amplification and estimation on $Q$ to improve the running time of the method. \\

\noindent
{\bf The Quantum Iterative Method}
Use Amplitude Amplification and Estimation with unitary $Q$ 
\[
Q : \ket{0}^\ell \rightarrow \frac{1}{T} \ket{\tilde{\theta}_\tau}\ket{0} + \ket{G}\ket{0^\bot}
\]
where $\ket{G}$ is a garbage state and $\ket{0^{\perp}}$ denotes a state that is orthogonal to $\ket{0}$, to output 
$\ket{\tilde{\theta}_\tau}$ and an estimate for $\norm{\tilde{\theta}_\tau}$. The parameter $T$ is taken to be $\frac{ \tau+1}{\norm{\tilde{\theta}_\tau}} $. 
We next provide a procedure for implementing $Q$.   \\

\noindent
{\bf Implementation of the unitary }$Q : \ket{0}^\ell \rightarrow \frac{1}{T} \ket{\tilde{\theta}_\tau}\ket{0} + \ket{G}\ket{0^\bot}$\\
\begin{enumerate}
\item Create the state $\frac{1}{\sqrt{\tau+1}} \sum_{t=0}^{\tau} \ket{t} \ket{0}\ket{0}  \ket{r_0} \ket{0}$

\item Apply the unitary procedure $U$ from Claim \ref{U} and trace out the second register to get
\als{ 
& \frac{1}{\sqrt{\tau+1}}  \sum_{t=1}^{\tau}  \ket{t} \big ( \alpha \norm{\tilde{S}^{t-1}(\tilde{L}(r_{0}))}  \ket{t} \ket{\tilde{S}^{t-1}(\tilde{L}(r_{0}))}\ket{0} + \nl 
& \ket{G'_t}\ket{1} \big ) +  \frac{1}{\sqrt{\tau+1}} \ket{0}\ket{0}  \ket{r_0} \ket{0}. 
} 
\item Conditioned on the last register being $0$, perform a $CNOT_{1,2}$ to erase the second copy of $t$ and then by exchanging the place of the second and third register we get 
\als{ 
& \frac{1}{\sqrt{\tau+1}} \sum_{t=1}^{\tau}  \ket{t} \big ( \alpha \norm{\tilde{S}^{t-1}(\tilde{L}(r_{0}))}   \ket{\tilde{S}^{t-1}(\tilde{L}(r_{0}))}\ket{0}\ket{0} + \nl 
& \ket{G'_t}\ket{1} \big) + \frac{1}{\sqrt{\tau+1}} \ket{0} \ket{r_0} \ket{0} \ket{0}.  \nl 
} 

\item Conditioned on the last register being $0$ perform a Hadamard on all qubits of the first register and then by exchanging the place of the first and second register we get
\als{ 
& \frac{1}{\tau+1}  \sum_{y=0}^{\tau} \Big(   \ket{r_0} + \alpha \sum_{t=1}^{\tau}  (-1)^{y \cdot t} \norm{\tilde{S}^{t-1}(\tilde{L}(r_{0}))}  \nl 
& \ket{\tilde{S}^{t-1}(\tilde{L}(r_{0}))}  \Big ) \ket{y}\ket{0}  \ket{0} +  \ket{G^{\prime\prime}}\ket{1}  \nl 
& =\frac{\norm{\tilde{\theta}_\tau}}{\tau+1} \Big(  \frac{1}{\norm{\tilde{\theta}_\tau}}  \Big( \ket{r_0} + \alpha \sum_{t=1}^{\tau}  \norm{\tilde{S}^{t-1}(\tilde{L}(r_{0}))}  \nl 
& \ket{\tilde{S}^{t-1}(\tilde{L}(r_{0}))}  \Big ) \Big)  \ket{0}\ket{0} \ket{0}  + \ket{G} ( \ket{0}\ket{0}\ket{0})^\bot  \nl 
&= \frac{1}{T}\ket{\tilde{\theta}_\tau}\ket{0} + \ket{G}\ket{0^\bot}. 
}  
with 
$\ket{\tilde{\theta}_\tau} =  \frac{1}{\norm{\tilde{\theta}_\tau}}  \Big( \ket{r_0} + \alpha \sum_{t=1}^{\tau}  \norm{\tilde{S}^{t-1}(\tilde{L}(r_{0}))}  
 \ket{\tilde{S}^{t-1}(\tilde{L}(r_{0}))} \Big)  $ 
and $T = \frac{ \tau+1}{\norm{\tilde{\theta}_\tau}}$.
\end{enumerate}

\subsection{Analysis}\label{analysis}
The main result for this section that proves the correctness and bounds the running time for the quantum iterative method is the following: 
\begin{theorem} \label{aim} 
Given unitary $U$ with cost $C_{U}$ that implements an $(\epsilon, \delta)$-approximate quantum multistep procedure (as in Claim \ref{U}) for an iterative method with affine update rules and $\tau$ steps, there is a quantum algorithm with running time $O(T(C_{U}+ \log \tau))$ with $T=\frac{ \tau+1}{\norm{\tilde{\theta}_\tau}} $ that outputs a state $\ket{\tilde{\theta}_\tau}$ such that, 
$$\norm{\ket{\theta_\tau} - \ket{\tilde{\theta}_\tau}} \leq \frac{\sqrt{2} \alpha \tau^2 \epsilon}{\norm{\theta_\tau}}. $$
\end{theorem} 
\noindent We prove separately the statements about the correctness and the running time for the quantum iterative method below, together with some additional comments about the applications. 
\subsubsection{Correctness}

We calculate the $\ell_{2}$ distance between the final state $\ket{\tilde{\theta}_\tau}$ and the correct state which is given by
$\ket{{\theta}_\tau} =  \frac{1}{\norm{{\theta}_\tau}}  \left( \ket{r_0} + \alpha \sum_{t=1}^{\tau}  \norm{{S}^{t-1}({L}(r_{0}))}   \ket{{S}^{t-1}({L}(r_{0}))} \right)$.
We first compute the non-normalised distance when a $(\epsilon, \delta)$ approximate step is used for implementing $Q$, 

\als{ 
 \norm{\theta_\tau - \tilde{\theta}_\tau} & \leq \alpha \sum_{t=1}^{\tau} \norm{S^{t-1}(L(r_0)) - \tilde{S}^{t-1}(\tilde{L}(r_0))}  \nl 
& \leq \alpha  \sum_{t=1}^{\tau} t \epsilon \leq \alpha \tau^2 \epsilon
} 
Then by Claim \ref{unnorm} we can also bound the normalized distance, 
\[
\norm{\ket{\theta_\tau} - \ket{\tilde{\theta}_\tau}} \leq \frac{\sqrt{2} \alpha \tau^2 \epsilon}{\norm{\theta_\tau}}. 
\]
\noindent This proves the statement about correctness for Theorem \ref{aim}. 

Further, we note that Amplitude Estimation outputs an estimate $\norm{\tilde{\theta}_\tau}$ such that $\frac{ \norm{\tilde{\theta}_\tau}} { \norm{\theta_\tau} } \in (1\pm \xi)$ with a constant overhead (where $\xi$ is a small constant). In the applications we will consider, $\norm{\theta_\tau}$ is at least $\Omega(1)$ and at most $O(\tau)$ and $\alpha=O(1)$. Hence, again, by taking $\epsilon=O(\frac{1}{\tau^2})$ appropriately small, we can ensure that the approximation error for the $\ell_{2}$ distance is less than 
a suitably small constant.

\subsubsection{Running time} \label{itertime}

The expected running time is the expected running time of Amplitude Amplification (which is the same as that of Amplitude Estimation), which is $T$ times the cost of implementing the unitary $Q$, which is $O(C_U+\log\tau)$. Overall, the expected running time is $O(T (C_U+\log\tau))$ as claimed in Theorem \ref{aim}. 

Let us make some additional remarks. In our applications we will have $\norm{\theta_\tau}=\Omega(1)$, which also implies that
$\norm{\tilde{\theta}_\tau}\geq \norm{\theta_\tau} - \alpha \tau^2 \epsilon=\Omega(1)$ for appropriately small $\epsilon$.
Hence, the running time for the applications will be $O(\tau (C_U+\log\tau))$. In many applications including psd linear systems and weighted least squares problems, 
each step of the iterative method involves multiplication by a fixed matrix. In this case, we can implement $U$ with cost $C_U=O(C_V+\log \tau)$ as shown in 
Proposition \ref{48}) and hence get an overall running time $O(\tau (C_V+\log \tau))$. However, the cost $C_{U}$ can be $O(\tau C_{V})$ for the case when each step of the iterative method involves multiplication by a different matrix. In this case the worst-case running time is $O(\tau^{2} C_V)$, this running time occurs in our application to stochastic gradient descent for the weighted least squares problem. 

For the quantum iterative method to approximate the output state to error $\delta'$, we would need the error $\epsilon$ in implementing $V$ to be $O( \frac{\delta'}{\tau^{2}})$, which would make the error in $U$ at most $O( \frac{\delta'}{\tau})$. We implement $V$ using our linear system solver (Theorem \ref{qlinsys}) and therefore achieve a running time of $O(\tau (\tau^{2} \mu(A)) /\delta)$, which is superlinear in $\tau$.

The running time of linear system solvers in the QRAM data structure model has been improved to $O(\mu(A) \kappa(A) \text{polylog}(1/\epsilon))$ in the recent work \cite{CGJ18}. More precisely, they presented a quantum linear system solver with running time  $O(\mu(A) \kappa(A) \text{polylog}(1/\epsilon))$  given a unitary block encoding for $A$, and also showed that our results on QRAM data structures presented here are equivalent to the construction 
of unitary block encodings with parameter $\mu(A)$. If we use the improved linear system solver, the running time would be $O(\tau \mu(A) \log (\tau^{2}/\delta) )$, that is the dependence on the number of steps is indeed linear.

\section{Improved quantum algorithms for matrix multiplication and linear systems} 
In Section \ref{ds} and \ref{isve} we generalize the data structure for state preparation and the quantum algorithm used for singular value estimation that we had proposed 
in \cite{KP16}. The generalized singular value estimation algorithm has a faster running time for several classes of matrices. We use the improved singular value estimation algorithm for solving quantum linear systems and quantum matrix multiplication in Section \ref{lsmul}. Finally, in Section \ref{istep} we show how to implement a single step of the quantum iterative method. 

\subsection{A generalized state preparation data structure} \label{ds} 
The QRAM data structure in \cite{KP16} enabled the efficient  preparation of vector states $\ket{v}$ for vectors $v$ stored in the QRAM. We provide a more general data structure 
that enables the efficient preparation of more general normalized states corresponding to the rows/columns of a matrix. We begin by defining the notion of a normalized state. 

\begin{defn} The normalized vector state corresponding to vector $x \in \R^{n}$ and $M \in \R$ such that $\norm{x}^{2}  \leq M$ is the quantum state $\ket{\overline{x}} = \frac{1}{\sqrt{M}} \en{ \sum_{i \in [n]} x_{i} \ket{i} + 
(M-\norm{x}^{2} )^{1/2} \ket{n+1} } $.  
\end{defn} 
\noindent We recall that in the QRAM data structure model, the entries of the matrix $A$ arrive in an online manner and are stored in a  
data structure. The insertion and update times for the data structure are poly-logarithmic per entry, so the time required to construct the data structure is $O(w \log^{2}(mn) )$ where $w$ is the number of non zero entries in $A$. We have the following theorem that describes the efficient QRAM data structure as in Definition \ref{d2}, this data structure will be used to obtain the  improved singular value estimation algorithm. 
\begin{theorem} \label{dsplus} 
Let $A \in \R^{m\times n}$ and $M= \max_{i \in [m]} \norm{a_{i}}^{2}$. There is an efficient QRAM data structure for storing matrix entries $(i, j, a_{ij})$ such that access to this data structure allows a quantum algorithm to implement the following unitary 
in time $\widetilde{O}(\log (mn) )$.  
\all{ 
U \ket{  i, 0^{\lceil \log (n+1) \rceil} } & = \ket{i} \frac{1}{\sqrt{M}} \big (  \sum_{j \in [n]} a_{ij} \ket{j} + \nl 
&(M - \norm{a_{i}}^{2})^{1/2}\ket{n+1}  \big ) 
}  {nscreate} 
\end{theorem}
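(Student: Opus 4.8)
The plan is to build, for each row of $A$, a balanced binary ``amplitude-encoding'' tree and then to prepare the target state by a sequence of controlled rotations whose angles are read off from the values stored in that tree; this extends the constructions of \cite{GR02, KP16}. The key structural fact is that a vector $v$ can be loaded as $\frac{1}{\norm{v}}\sum_j v_j\ket{j}$ by processing the index register one qubit at a time, and the information needed for each qubit's rotation is exactly the partial sums of squares stored along a root-to-leaf path.

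First, the data structure. For each row $i\in[m]$ we store a binary tree $B_i$ of depth $\lceil\log(n+1)\rceil$ with leaves indexed by $j\in[n+1]$: leaf $j$ (for $j\le n$) holds the signed value $a_{ij}$, and every internal node holds the sum of the squares of the leaf values in its subtree, so the root of $B_i$ holds $\norm{a_i}^2$. We create only the nodes lying on a root-to-leaf path of some nonzero entry, so $B_i$ has $O(\mathrm{nnz}(a_i)\log n)$ nodes and the total size over all rows is $O(w\log mn)$. Inserting or updating an entry $(i,j,a_{ij})$ touches the $O(\log n)$ ancestors of leaf $j$, adding the change in $a_{ij}^2$ to each stored sum; each node access costs $O(\log mn)$ for addressing, so processing the whole list costs $O(w\log^2 mn)$, as claimed. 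We also maintain the single global scalar $M=\max_i\norm{a_i}^2$, read off the roots and updated in $O(\log mn)$ time per insertion.

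Next, the state preparation. Assuming quantum query access to the classical data structure, a node's stored value can be read into an ancilla in time $\widetilde{O}(\log mn)$ given its address in superposition. Conditioned on $\ket{i}$, we process the index register from the most significant qubit down: at step $\ell$, conditioned on the first $\ell-1$ qubits spelling an internal-node address $s$, we read the values at $s$ and its two children $s0,s1$ and apply to the $\ell$-th qubit the rotation sending $\ket{0}$ to $\sqrt{B_{s0}/B_s}\,\ket{0}+\sqrt{B_{s1}/B_s}\,\ket{1}$. By the standard telescoping of these ratios the amplitude on leaf $j$ becomes $\sqrt{B_j/B_\emptyset}=|a_{ij}|/\norm{a_i}$, and injecting the correct signs at the leaf level (using $a_{ij}$ rather than $|a_{ij}|$ in the last rotation) yields $\frac{1}{\norm{a_i}}\sum_j a_{ij}\ket{j}$. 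To obtain the $\sqrt M$-normalization and the padding coordinate, we regard each row as the augmented vector $\tilde a_i\in\R^{n+1}$ with $\tilde a_{i,n+1}=(M-\norm{a_i}^2)^{1/2}$, so $\norm{\tilde a_i}^2=M$ and $\ket{\tilde a_i}$ is exactly the target; this coordinate is never stored, but at the single rotation step separating leaf $n+1$ from the block $\{1,\dots,n\}$ we compute the angle on the fly from the stored $\norm{a_i}^2$ and the global $M$. The procedure uses $O(\log n)$ controlled rotations, each dominated by an $\widetilde{O}(\log mn)$ lookup, for a total of $\widetilde{O}(\log mn)$, and correctness follows by composing the per-step identities.

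I expect the main obstacle to be the bookkeeping around the padding coordinate together with the $M$-normalization: since $M$ is a global maximum that may grow as entries arrive online, one must ensure that (i) an increase of $M$ triggers no per-row rewriting of the trees, and (ii) the index $n+1$ is slotted so that exactly one controlled rotation, driven by $M$ and $\norm{a_i}^2$, realizes the split between the real coordinates and the padding while all other rotations stay $M$-independent. The sign handling at the leaves and the coherence of the data-structure queries (reading a node value from an address held in superposition) are the remaining delicate points, but both are standard in the quantum query model.
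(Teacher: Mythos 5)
Your proposal is correct and follows essentially the same route as the paper's proof: identical per-row binary trees storing squared partial sums (with signs at the leaves), a separately maintained global maximum $M$, the same $O(w\log mn)$ space and $O(w\log^{2}mn)$ construction accounting, and state preparation via conditional rotations whose angles are read coherently off root-to-leaf paths. The one cosmetic difference is the padding coordinate: the paper performs the $M$-versus-$\norm{a_{i}}^{2}$ rotation \emph{first}, tagging the $\ket{n+1}$ branch with an ancilla qubit that conditions all subsequent tree rotations and is uncomputed at the end --- which cleanly sidesteps the delicate point you flag, since for general $n$ leaf $n+1$ need not split off in a single rotation step --- whereas you fold that rotation into the tree traversal, which works as stated only after padding $n+1$ to a power of two.
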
 
\begin{proof} 
The data structure maintains an array of $m$ binary trees $B_{i}, i \in [m]$ one for each row of the matrix. The leaf node $j$ 
of tree $B_{i}$, if present, stores $(a_{ij}^{2}, sign(a_{ij}))$. An internal node $u$ stores the sum of the values of the leaf nodes
in the subtree rooted at $u$. In addition, there is an extra node $M$ that, at any instant of time stores the maximum row norm $M=\max_{i \in [m]} \norm{a_{i}}^{2} $ for the 
matrix $A$ currently stored in the data structure. 

The data structure is initially empty, and the value stored in node $M$ is $0$. We next describe the update when entry $(i, j,a_{ij})$ is inserted or updated, note that the 
insertion and update must be carried out in time $O(\text{polylog}(n))$ for the data structure to be efficient. After giving the update procedure, we provide the 
efficient implementation for the unitary $U$ in \eqref{nscreate} given access to the data structure. 

The update algorithm on receiving input $(i, j, a_{ij})$ creates leaf node $j$ in tree $B_{i}$ if not present and updates it otherwise. Deletion is a special case of update and is 
equivalent to the case $(0,i,j)$. The algorithm updates the value of all nodes in the path between the leaf and the root of the tree.
The update requires time $O(\log^{2} (mn))$ as at most $O(\log n)$ nodes on the path from node $j$ to the root in the tree $B_{i}$ 
are updated and each update requires time $O(\log (mn))$ to find the address and the value of the node being updated. At the end of each update 
the root of $B_{i}$ stores the squared norm $\norm{a_{i}}^{2}$ for the vector stored in $B_{i}$. The algorithm then compares $M$ with $\norm{a_{i}}^{2}$ and 
updates the maximum if $\norm{a_{i}}^{2} > M$, this additional step requires time $O(\log n)$. It follows that the data structure is efficient and that after the 
update, the value stored in the node $M$ is $\max_{i \in [m]} \norm{a_{i}}^{2}$. 

In order to implement $U$ we first perform a controlled rotation on the second register using the values $M$ and $\norm{a_{i}}$ which is stored at the root of $B_{i}$. We also introduce a tag qubit and 
tag the part of the state on the second register with value $\ket{n+1}$, 
\all{ 
\ket{  i, 0^{\lceil \log (n+1) \rceil } }  \to  \ket{i} \frac{1}{ \sqrt{M}} \big( \norm{a_{i}} \ket{0^{\lceil \log (n+1) \rceil }} \ket{0} + \nl 
(M - \norm{a_{i}}^{2} )^{1/2}  \ket{n+1} \ket{1} \big) 
 } {tag} 
We then proceed similarly to the construction in  \cite{KP16}. Let $B_{i,k}$ be the value of an internal node $k$ of tree $B_{i}$ at depth $t$. We apply a series of conditional rotations to the second register, conditioned on the first register being $\ket{i}$ and the first $t$-qubits of the second register being $\ket{k}$ and the tag qubit being $\ket{0}$, the rotation applied is: 
\[ 
\ket{i} \ket{k} \ket{0} \to \ket{i}\ket{k} \frac{1}{\sqrt{ B_{i, k}}} \left( \sqrt{ B_{i,2k} } \ket{0} + \sqrt{ B_{i,2k+1} } \ket{1} \right) 
\] 
We take positive square roots except for the leaf nodes where the sign of the square root is the same as $sign(a_{ij})$ of the entry stored at the leaf node. 
The tag qubit is uncomputed after all the conditional rotations have been performed by mapping $\ket{n+1}\ket{1}$ to $\ket{n+1}\ket{0}$. 

Correctness follows since conditioned on the tag qubit being $\ket{0}$ the conditional rotations produce the state $\frac{1}{ \sqrt{M}\norm{a_{i}}} \sum_{j} a_{ij} \ket{j}$
and the amplitude for the tagged part is $\sqrt{ (M - \norm{a_{i}}^{2})/M } $ matching the amplitudes in equation \eqref{nscreate}. The time for implementing $U$ is 
$\widetilde{O}(\log (mn))$ as the number of quantum queries to the data structure is $\log ^{2}(mn)$ and each query takes poly-logarithmic time. 
\end{proof}

\subsection{Improved Singular Value Estimation} \label{isve} 
We first recall the definition of singular value estimation from the introduction, 
\begin{definition} \label{sve} 
Let $A= \sum_{i \in [k]} \sigma_{i} u_{i} v_{i}^{T} $ be the singular value decomposition for $A \in \R^{m \times n}$ for $k=\min(m,n)$ and let $\delta>0$. The singular value estimation (SVE) problem with error $\delta$ is given $\ket{b}= \sum_{i\in [k]} \beta_i \ket{v_i}$, to perform the mapping  
\[
 \sum_{i\in [k]}  \beta_i \ket{v_i} \ket{0} \rightarrow \sum_{i \in [k]}  \beta_i \ket{v_i}\ket{\overline{\sigma}_i},
\]
such that $|\overline{\sigma}_i - \sigma_i| \leq \delta$ for all $i \in [k]$ with probability $1-1/poly(n)$. 
\end{definition}

\noindent The theorem below provides a generalized singular value estimation algorithm that extends the algorithm 
from \cite{KP16} and the quantum walk algorithms used for linear systems \cite{CKS15}. 
\begin{theorem} \label{sveplus} 
Let $A \in \R^{m\times n}$ be a matrix and suppose there exist $P, Q \in \R^{m\times n}$ and $\mu >0$ such that
$\norm{p_{i}}_{2} \leq 1 \; \forall i \in [m], \;\norm{q^{j}}_{2} \leq 1 \; \forall j \in [n]$ and
\all{ 
A/\mu = P \circ Q.  
} {hadamard} 
If unitaries $U: \ket{i} \ket{0^{\lceil \log (n+1) \rceil}} \to \ket{i} \ket{\overline{p}_{i}}$ and $V:  \ket{0^{\lceil \log (m+1) \rceil}}  \ket{j} \to \ket{\overline{q}^{j}} \ket{j}$ can be implemented in time $O(\log^{2} (mn))$ then there is a quantum algorithm for SVE with error $\delta$ in time $\widetilde{O}(\mu/\delta)$. 
\end{theorem}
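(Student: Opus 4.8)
The plan is to realize $A/\mu$ as the product of two isometries and then run phase estimation on a Szegedy-type quantum walk whose eigenphases encode the singular values. Using the normalized vector states, I would define two maps into the combined space $\R^{m+1}\otimes\R^{n+1}$,
\[
\mathcal{P} = \sum_{i\in[m]} \ket{i}\ket{\overline{p}_i}\bra{i}, \qquad \mathcal{Q} = \sum_{j\in[n]} \ket{\overline{q}^j}\ket{j}\bra{j}.
\]
Since the rows of $P$ and columns of $Q$ have norm at most $1$, both $\mathcal{P}$ and $\mathcal{Q}$ are isometries, i.e. $\mathcal{P}^\dagger\mathcal{P}=I_m$ and $\mathcal{Q}^\dagger\mathcal{Q}=I_n$. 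The conceptual core of the reduction is the identity $\mathcal{P}^\dagger\mathcal{Q}=A/\mu$: when one computes the $(i,j)$ entry, the first-register overlap contributes $q_{ij}$ and the second-register overlap contributes $p_{ij}$, so $(\mathcal{P}^\dagger\mathcal{Q})_{ij}=p_{ij}q_{ij}=(P\circ Q)_{ij}=A_{ij}/\mu$ by hypothesis \eqref{hadamard}.

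Next I would define the two reflections $R_1=2\mathcal{P}\mathcal{P}^\dagger-I$ and $R_2=2\mathcal{Q}\mathcal{Q}^\dagger-I$ together with the walk operator $W=R_1R_2$. These are implementable from the given unitaries: since $U$ prepares $\ket{i}\ket{\overline{p}_i}$ from $\ket{i}\ket{0}$, the projector $\mathcal{P}\mathcal{P}^\dagger$ is $U$ conjugating the projector $\Pi_0$ onto second register $\ket{0}$, whence $R_1=U(2\Pi_0-I)U^\dagger$, and symmetrically $R_2$ is obtained from $V$. Hence one step of $W$ costs $O(T(U)+T(V))=\widetilde{O}(\log mn)$.

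The heart of the analysis is the two-reflection (Jordan's lemma) argument. Writing the SVD $A/\mu=\sum_i(\sigma_i/\mu)u_iv_i^T$, the identity $\mathcal{P}^\dagger\mathcal{Q}=A/\mu$ gives $\mathcal{P}\mathcal{P}^\dagger(\mathcal{Q}v_i)=(\sigma_i/\mu)\mathcal{P}u_i$ and $\mathcal{Q}\mathcal{Q}^\dagger(\mathcal{P}u_i)=(\sigma_i/\mu)\mathcal{Q}v_i$, so each pair $\{\mathcal{P}u_i,\mathcal{Q}v_i\}$ spans a two-dimensional $W$-invariant plane in which the two unit vectors meet at angle $\theta_i$ with $\cos\theta_i=\sigma_i/\mu$. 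On this plane $R_1,R_2$ act as reflections about the two lines, so $W$ is a rotation by $2\theta_i$ with eigenvalues $e^{\pm 2\iota\theta_i}$, and $\mathcal{Q}\ket{v_i}$ decomposes as an equal superposition of the two eigenvectors. I would then run phase estimation (Theorem~\ref{pest}) on $W$ applied to $\mathcal{Q}\ket{v_i}$ to estimate $2\theta_i$, set $\overline{\sigma}_i=\mu\cos(\overline{\theta}_i)$ (the ambiguity $\pm2\theta_i$ is harmless since $\cos$ is even), uncompute the phase register, and apply $\mathcal{Q}^\dagger$ via $V^\dagger$ to return the input state to $\sum_i\beta_i\ket{v_i}\ket{\overline{\sigma}_i}$.

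Finally I would track precision. Since $\sigma_i=\mu\cos\theta_i$ gives $\norm{\Delta\sigma_i}\le\mu\norm{\Delta\theta_i}$, estimating the eigenphase to additive error $O(\delta/\mu)$ yields $\overline{\sigma}_i$ within $\delta$; by Theorem~\ref{pest} this requires $O(\mu/\delta)$ applications of $W$, for a total running time $\widetilde{O}(\mu/\delta)$. I expect the main obstacle to be this spectral step: verifying rigorously that the $\{\mathcal{P}u_i,\mathcal{Q}v_i\}$ span genuinely $W$-invariant planes, that $\mathcal{Q}\ket{v_i}$ lies in the span with the stated balanced decomposition, and that the degenerate cases $\theta_i\to 0$ (where the plane collapses to a $+1$-eigenline) and $\theta_i=\pi/2$ (kernel directions) are handled so that the phase-to-singular-value conversion and the concluding uncomputation remain well defined.
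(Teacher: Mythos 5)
Your proposal is correct and follows essentially the same route as the paper: the isometries $\mathcal{P},\mathcal{Q}$ are the paper's $\widetilde{P},\widetilde{Q}$, the factorization $\mathcal{P}^{\dagger}\mathcal{Q}=A/\mu$, the walk $W=(2\mathcal{P}\mathcal{P}^{\dagger}-I)(2\mathcal{Q}\mathcal{Q}^{\dagger}-I)$, and phase estimation with the conversion $\overline{\sigma}_i=\mu\cos(\overline{\theta}_i)$ all match the paper's proof and its Algorithm~\ref{gensve}. The only cosmetic difference is that the paper first extends $A$ to an $(m+1)\times(n+1)$ matrix $\overline{A}$ with an extra row/column (fixing the action of the unitaries on the padding index $i=m+1$, which your reflection $R_1=U(2\Pi_0-I)U^{\dagger}$ leaves implicit), precisely to dispose of the well-definedness issues you flag at the end.
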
 
\begin{proof} 
Let $\overline{P}, \overline{Q} \in \R^{(m+1) \times (n+1)}$ be matrices with rows and columns respectively equal to the normalized states  
$\overline{p}_{i},  \overline{q}^{j}$ for $i\in [m],  j\in [n]$ and an additional row or column $\overline{p}_{m+1} = e_{m+1}, \overline{q}^{n+1} = e_{n+1}$.  
Let $\overline{A}= \left (
\begin{matrix} 
A &0 \\ 
0 &\mu  
\end{matrix} \right )$ be an extension of $A$ of size $(m+1) \times (n+1)$ so that the factorization $\overline{A}/\mu = \overline{P} \circ \overline{Q}$ holds. 

As $\overline{A}$ is a block diagonal matrix, its singular value decomposition is given by $\sum_{i} \sigma_{i} \overline{u}_{i} \overline{v}_{i}^{T}  + \mu e_{m+1}^{T}  e_{n+1}$ where $\sigma_{i}$ are singular values for $A$ and $\overline{u}_{i}, \overline{v}_{i}$ are obtained by appending an 
additional $0$ coordinate to the singular vectors $u_{i}, v_{i}$ of $A$. Define the operators $\widetilde{P} \in \R^{(m+1)(n+1) \times (m+1)}, 
\widetilde{Q} \in \R^{(m+1)(n+1) \times (n+1)}$ as, 
$\widetilde{P} \ket{i} = \ket{i} \ket{ \overline{p}_{i}}$ and  $ \widetilde{Q} \ket{j}  = \ket{ \overline{q}^{j} } \ket {j}$.

The columns of $\widetilde{P}, \widetilde{Q}$ are orthogonal unit vectors so we have $\widetilde{P}^{T}  \widetilde{P} = I_{m+1}$ and $\widetilde{Q}^{T}  \widetilde{Q}=I_{n+1}$. Multiplication by $\widetilde{P}, \widetilde{Q}$ can be implemented efficiently using the unitaries $U$, $V$ in the theorem statement. 
The unitary $\widetilde{U}$ on input $\ket{i} \ket{0^{\lceil \log (n+1) \rceil}}$ acts as $U$ for $i\in [m]$ and maps $\ket{m+1}\ket{0^{\lceil \log (n+1) \rceil}} \to \ket{e_{n+1}}$, it can easily be implemented using $U$. We illustrate below for multiplication by $\widetilde{P}$, 
$ \ket{z} \to \ket{ z, 0^{\lceil \log (n+1) \rceil} } \xrightarrow{\widetilde{U}} \sum_{i \in [m+1]} z_{i} \ket{ i, \overline{p}_{i} } = \ket{ \widetilde{P} z}$. 
Multiplication by $\widetilde{Q}$ can be implemented similarly using $\widetilde{V}$, thus the reflections  $2\widetilde{P}\widetilde{P}^{T}  -I$ and $2\widetilde{Q}\widetilde{Q}^{T}  -I$ can be performed in time $\widetilde{O}(\log (mn))$. 

Finally, the factorization $\widetilde{P}^{T} \widetilde{Q} = \overline{A}/\mu$ implies that the unitary $W= (2\widetilde{P}\widetilde{P}^{T}  -I). 
(2\widetilde{Q}\widetilde{Q}^{T}  -I)$ has eigenspaces $Span(\widetilde{P} \overline{u}_{i}, \widetilde{Q} \overline{v}_{i})$ with eigenvalues $e^{ \iota \theta_{i}}$ such that 
$\cos(\theta_{i}/2) = \sigma_{i}/\mu$. This relation between the eigenvalues of $W$ and the singular values of $A$ is known (for example, see \cite{KP16}, Lemma 5.3). 
Phase estimation for $W$ on $\ket{\widetilde{Q}\overline{v}_{i}}$ recovers an estimate $\overline{\theta_{i}}$ for $\theta_{i}$ up to additive error $\delta/\mu$ in time $\widetilde{O}(\mu/\delta)$, which provides an estimate $\cos(\overline{\theta_{i}}/2)$ for $\sigma_{i}$ within additive error $O(\delta)$. The generalized singular value estimation is stated as Algorithm \ref{gensve}.

\begin{algorithm}[H] 
\caption{Generalized quantum singular value estimation.} \label{gensve}
\begin{algorithmic}[1]
\REQUIRE $A \in \R^{m\times n}$, $x \in \R^{n}$, efficient implementation of unitaries $U, V$. 
Input state $\ket{x}$ and precision parameter $\epsilon>0$.  \\
\begin{enumerate} 
\item Let $\ket{\overline{x}} = \sum_{i \in [n]} \alpha_{i} \ket{\overline{v}_{i}}$ be the decomposition of the input in the basis of 
singular vectors for $\overline{A}$. Create state $\ket{0^{\lceil \log (m+1) \rceil }, \overline{x}}$ and apply unitary $\widetilde{V}$ to 
obtain $\ket{\widetilde{Q}\overline{x}} = \sum_{i} \alpha_{i} \ket{\widetilde{Q}\overline{v}_{i}}$. \\
\item Perform phase estimation with precision $2\epsilon >0$
on input $\ket{\widetilde{Q}\overline{x}}$ for the unitary $W$ in Theorem \ref{sveplus}. 
\item Compute $\overline{\sigma_{i}} = \cos(\overline{\theta_{i}}/2) \mu(A)$ where 
 $\overline{\theta_{i}}$ is the estimate from phase estimation, and uncompute the 
output of the phase estimation to obtain $\sum_{i} \alpha_{i} \ket{\widetilde{Q}\overline{v}_{i}, \overline{\sigma_{i}} } $.   \nl 
\item Apply the inverse of $\widetilde{V}$ to multiply the first register with the inverse of $\widetilde{Q}$ and obtain $\sum_{i} \alpha_{i} \ket{\overline{v}_{i} } \ket{\overline{\sigma_{i}}}$. \nl 
 \end{enumerate} 
\end{algorithmic}
\end{algorithm}
\noindent Note that in step 1 we work with $\overline{x} = (x,0) \in \R^{n+1}$, this is the same as working with $\ket{x}$ as 
adding an additional $0$ coordinate does not change the corresponding vector state.

\end{proof} 
\noindent The main theorem for this section is obtained by using the data structure from Theorem \ref{dsplus} to efficiently implement the unitaries  $U,V$ required for the generalized singular value estimation procedure. 

\begin{theorem} \label{sveplus1} 
For (i) $\mu(A) = \sqrt{s_{2p}(A) s_{2(1-p)}(A^{T} )}$ for all $p\in [0,1]$ and (ii) $\mu(A) = \norm{A}_{F}$, there are efficient QRAM data structures for storing $A\in \R^{m\times n}$ such that a quantum algorithm with access to such data structures can perform SVE for $A$ with error $\delta$ in time $\widetilde{O}(\mu(A)/\delta)$. 
\end{theorem} 
\begin{proof} 

We first prove part (i). Theorem \ref{sveplus} holds for any choice of $P, Q$ such that $A/\mu = P \circ Q$ provided the unitaries $U$ and $V$ can be implemented efficiently in time $\widetilde{O}(\log (mn))$, that is if the normalized states corresponding to the rows of $P$ and the columns of $Q$ can be prepared efficiently. For a given $p\in [0,1]$, the data structure in Theorem \ref{dsplus} allows us to prepare the normalized states for the following choice of $P$ and $Q$, 
\all{ 
p_{ij} = \frac{ sgn(a_{ij}) a_{ij}^{p} } { \max \norm{a_{i}}_{2p}^{2p} },  \;\;\;\;\; q_{ij} = \frac{ a_{ij}^{1-p} } { \max \norm{a^{j}}_{2(1-p)}^{2(1-p)}}
} {fact} 
Indeed, in order to implement the unitaries $U$ and $V$ corresponding to this choice of $P, Q$, we create two copies of the data structure in Theorem \ref{dsplus} that respectively store the rows and the columns of $A$. Given entry $(i, j, a_{ij})$ instead of $a_{ij}$, we store $sgn(a_{ij}) a_{ij}^{p}$ and $a_{ij}^{1-p}$. 
The normalization factor $\mu$ for this factorization is $\mu_{p}(A) = \sqrt{ s_{2p}(A) s_{2(1-p)}(A^{T} )}$, where $s_{p}(A) = \max_{i \in [m]} \norm{a_{i}}_{p}^{p}$ and $s_{p}(A^{T} )= \max_{j \in [n]} \norm{a^{j}}_{p}^{p}$.

Part (ii) follows from the data structure used for SVE algorithm given in \cite{KP16}, it corresponds to the 
factorization $A/\norm{A}_{F}= P \circ Q$ with $p_{ij} = \frac{ a_{ij} } { \norm{a_{i}} }$ and $q_{ij} =  \frac{ \norm{a_{i}}  } { \norm{A}_{F} }$. 
\end{proof} 
Theorem \ref{sveplus1} gives different efficient data structures that provide an SVE algorithm with running time $\widetilde{O}(\mu(A)/\delta)$ for different values of $\mu$. We show next that in the QRAM data structure model, one can achieve the minimum $\mu(A)$ over a subset of these data structures. 

\begin{theorem} \label{optqds} 
Let $\mathcal{P}$ be a set of values in $[0,1]$ such that $|\mathcal{P}|=O(1)$. There is algorithm in the QRAM data structure model (Definition \ref{d3}), that performs SVE for $A \in \R^{m\times n}$ in time $\widetilde{O}(\mu(A)/\delta)$ 
with $\mu(A) = \min_{p \in \mathcal{P}} \left( \norm{A}_{F}, \sqrt{s_{2p}(A) s_{2(1-p)}(A^{T} )}\right)$. 
\end{theorem} 
\begin{proof} 
Let $m$ be the number of non zero entries in $A$, we make two passes over the entries $(i,j, a_{ij})$. In the first pass we find the optimal value 
$\mu(A) = \min_{p\in [0,1]} \left( \norm{A}_{F}, \sqrt{s_{2p}(A) s_{2(1-p)}(A^{T})} \right)$. The cost for this is $O(|\mathcal{P}|m)$ as we need to 
updating the $2|\mathcal{P}|$ values $s_{2p}(A), s_{2(1-p)}(A^{T})$ each time we process a matrix entry $(i,j, a_{ij})$. At the end of the 
first pass we know the data structure that corresponds to the optimal value of $\mu(A)$. In the second pass we stream over the matrix entries and contrucct the data structure corresponding to 
$\mu(A)$ as in Theorem \ref{sveplus1}. 

The overall complexity is linear in $m$ as $|\mathcal{P}|$ is a constant, the pre-processing step therefore satisfies the requirements for the QRAM data structure model in Definition \ref{d3}. 
Given access to the optimal data structure, Theorem \ref{sveplus1} implies that there is a quantum algorithm for SVE with error $\delta$ in time $\widetilde{O}(\mu(A)/\delta)$ for $\mu(A) = \min_{p \in \mathcal{P}} \left( \norm{A}_{F}, \sqrt{s_{2p}(A) s_{2(1-p)}(A^{T} )}\right)$. 
\end{proof}

 If we are restricted to a single pass over the matrix entries, we can construct two data structures for $A$ for $\mu(A)=\norm{A}_{F}$ and $\mu(A) = s_{1}(A)$ (corresponding to $p=1/2$) with a constant overhead and then use the data structure that achieves the smaller value for $\mu(A)$. This construction covers the case of sparse and low rank matrices that arise often in practice and will be the one that we shall use for our iterative method. 
 
 \begin{theorem} \label{simps} 
 There is algorithm in the QRAM data structure model (Definition \ref{d3}), that makes a single pass over the entries of $A \in \R^{m\times n}$ in the pre-processing step 
 and performs SVE for $A$ in time $\widetilde{O}(\mu(A)/\delta)$ for $\mu(A) = \min \left( \norm{A}_{F}, s_{1}(A) \right)$. 
 \end{theorem} 
 \noindent For our applications, one could also use the more general Theorem \ref{optqds} but we use Theorem \ref{simps} instead as it simplifies the analysis, involves only one pass over the matrix entries in the pre-processing  step and covers the useful cases of bounded $\ell_{1}$ norm and low-rank $A$ for which there is a significant quantum speed-up.

\subsection{Quantum matrix multiplication and linear systems} \label{lsmul} 
We provide algorithms for quantum linear systems and quantum matrix multiplication using the improved singular value estimation algorithm. 
We will see that once we perform singular value estimation for a matrix $A$, then multiplication with the matrix consists of a conditional rotation by an angle proportional to each singular value. Similarly, solving the linear system corresponding to the matrix $A$ is multiplication with the inverse of $A$, in other words, a conditional rotation by an angle proportional to the inverse of each singular value of $A$. 

The two algorithms are therefore very similar. We will also extend our matrix multiplication algorithm, i.e., the application of a linear operator, to the case of an affine operator, namely, given matrix $A$ and vector $b$ in memory, the algorithm maps any state $\ket{x}$ to a state close to $\ket{Ax+b}$. Last, we discuss briefly the cases for which our algorithm improves upon the running time of existing quantum linear system solvers.

If $A \in \R^{m\times n}$ is a rectangular matrix, then multiplication by $A$ reduces to multiplication by the square symmetric matrix 
$A'= \left (
\begin{matrix} 
0 &A \\ 
A^{T}  &0  
\end{matrix} \right )$ as $A' (0^{m}, x) = (Ax, 0)$. Therefore, without loss of generality we restrict our attention to symmetric matrices for the quantum matrix multiplication problem.  We state the quantum matrix multiplication Algorithm \ref{qmat} for a positive semidefinite matrix as in applications of the iterative method. We note that linear systems for general symmetric matrices are not much harder than the case described in Algorithm \ref{qmat}.

More precisely, the SVE procedure estimates the absolute value of the eigenvalues $|\lambda_{i}|$ for a symmetric $A$, and hence we also need to recover the sign of the $\lambda_{i}$ to perform matrix multiplication or to solve linear systems. The sign can be recovered by performing singular value estimation for the matrices $A, A + \mu I$ and comparing the estimates $|\overline{\lambda}_{i}|, |\overline{\lambda_{i} + \mu}|$. The second estimate is larger if $\lambda_{i}>0$, otherwise the first estimate is larger. We refer to \cite{WZP17} for details. We do not use this more general linear system solver in this paper, linear system solvers for positive semidefinite matrices described next suffice for 
our applications. 

\begin{algorithm}
\caption{Quantum matrix multiplication/linear systems.} \label{qmat}
\begin{algorithmic}[1]
\REQUIRE Matrix $A \in \R^{n\times n}$ stored in the data structure in Theorem \ref{optqds}, such that eigenvalues of 
$A$ lie in $[1/\kappa, 1]$. Input state $\ket{x}=\sum_{i} \beta_{i} \ket{v_{i}}\in \R^{n}$, where $v_{i}$ are right singular vectors for $A$. 
\STATE Perform singular value estimation with precision $\epsilon_{1}$ for $A$ on $\ket{x}$ to obtain $\sum_{i} \beta_{i} \ket{v_{i}} \ket{\overline{\lambda}_{i}}$. 
\STATE Perform a conditional rotation and uncompute the $SVE$ register to obtain the state: \nl 
(i)  $\sum_{i} \beta_{i} \ket{v_{i}} ( \overline{\lambda_{i}}  \ket{0} + \gamma \ket{1})$ for matrix multiplication.  \nl 
(ii) $\sum_{i} \beta_{i} \ket{v_{i}} ( \frac{1} {\kappa \overline{\lambda_{i}} }  \ket{0} + \gamma \ket{1})$ for linear systems. \nl 
\STATE Perform Amplitude Amplification with the unitary $V$ implementing steps $1$ and $2$, to obtain (i) $\ket{z} = \sum_{i} \beta_{i} \overline{\lambda_{i}}  \ket{v_{i}}$ or (ii) $\ket{z} = \sum_{i} \beta_{i} \frac{1}{\overline{\lambda_{i}}} \ket{v_{i}} $. \nl 
\end{algorithmic}
\end{algorithm}

\noindent 
We next analyze the correctness and provide the running time for the above algorithm.  
\begin{theorem} \label{lqmat}
Parts (i) and (ii) of Algorithm \ref{qmat} produce as output a
state $\ket{z}$ such that 
 $\norm{ \ket{\mathcal{A}x} - \ket{z}} \leq \delta$ in expected time
 $\widetilde{O}(\frac{\kappa^{2}(A) \mu(A)}{ \delta }  )$ for $\mathcal{A}= A$ and $\mathcal{A}= A^{-1}$ respectively. 
\end{theorem}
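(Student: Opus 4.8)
The plan is to analyze Algorithm~\ref{qmat} in two parts: first bounding the error of the post-selected output state, and then bounding the expected running time as the cost of one application of the unitary $V$ (implementing Steps 1--2) times the number of amplitude amplification rounds; the factored form $\frac{\kappa(A)\mu(A)}{\delta}\cdot\kappa(A)$ of the claimed bound already signals this structure.

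For correctness, I would start from the output of Step 1, which by \thmref{sveplus1} is $\sum_i \beta_i \ket{v_i}\ket{\overline{\lambda}_i}$ with $|\overline{\lambda}_i - \lambda_i|\leq \epsilon_1$. After the conditional rotation of Step 2 and conditioning on the flag qubit being $\ket{0}$, the unnormalized state is $\tilde z = \sum_i \beta_i \overline{\lambda}_i v_i$ for matrix multiplication and $w = \sum_i (\beta_i/\overline{\lambda}_i) v_i$ (up to the global scalar $\lambda_{min}$, which disappears upon normalization) for linear systems. For matrix multiplication the bound is immediate: $\norm{\tilde z - Ax}^2 = \sum_i \beta_i^2(\overline{\lambda}_i - \lambda_i)^2 \leq \epsilon_1^2$ since $\norm{x}=1$, and as $\norm{Ax}\geq 1/\kappa$, \claimref{unnorm} gives $\norm{\ket{\tilde z} - \ket{Ax}}\leq \sqrt{2}\,\epsilon_1\kappa$. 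The delicate case is linear systems, where I would expand $(\tfrac{1}{\overline{\lambda}_i}-\tfrac{1}{\lambda_i})^2 = (\lambda_i-\overline{\lambda}_i)^2/(\lambda_i^2\overline{\lambda}_i^2) \leq \epsilon_1^2/(\lambda_i^2\overline{\lambda}_i^2)$ and bound $1/\overline{\lambda}_i^2 = O(\kappa^2)$ to obtain $\norm{w - A^{-1}x}^2 \leq O(\epsilon_1^2\kappa^2)\sum_i \beta_i^2/\lambda_i^2 = O(\epsilon_1^2\kappa^2)\norm{A^{-1}x}^2$. This is a \emph{relative} error, so \claimref{unnorm} again yields $\norm{\ket{w}-\ket{A^{-1}x}}\leq O(\epsilon_1\kappa)$ after the factors of $\norm{A^{-1}x}$ cancel. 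Setting $\epsilon_1 = O(\delta/\kappa)$ makes both errors at most $\delta$.

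For the running time, \thmref{sveplus1} performs SVE in time $\widetilde{O}(\mu(A)/\epsilon_1) = \widetilde{O}(\kappa\mu(A)/\delta)$, and the conditional rotation adds only poly-logarithmic overhead, so one application of $V$ costs $\widetilde{O}(\kappa\mu(A)/\delta)$. It then remains to lower bound the amplitude of the good $\ket{0}$ branch. For matrix multiplication this amplitude is $\norm{\tilde z}\geq \norm{Ax}-\epsilon_1 = \Omega(1/\kappa)$, using $\lambda_i \geq 1/\kappa$; for linear systems it is $\lambda_{min}\bigl(\sum_i \beta_i^2/\overline{\lambda}_i^2\bigr)^{1/2} \geq \lambda_{min}/(1+\epsilon_1) = \Omega(1/\kappa)$, using $\overline{\lambda}_i \leq 1+\epsilon_1$ and $\sum_i\beta_i^2 = 1$. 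By \thmref{tampa}, amplitude amplification then prepares $\ket{z}$ in $O(\kappa)$ expected applications of $V$, for a total expected running time $O(\kappa)\cdot\widetilde{O}(\kappa\mu(A)/\delta) = \widetilde{O}(\kappa^2\mu(A)/\delta)$, matching the claim.

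I expect the main obstacle to be the correctness analysis for linear systems. A naive per-eigenvalue estimate $|1/\overline{\lambda}_i - 1/\lambda_i| \leq \kappa^2\epsilon_1$ together with the crude bound $\norm{A^{-1}x}\geq 1$ would force $\epsilon_1 = O(\delta/\kappa^2)$ and yield a worse $\widetilde{O}(\kappa^3\mu(A)/\delta)$ running time. The key observation is that the inversion error is naturally \emph{relative} to $\norm{A^{-1}x}$, because the same quantity $\beta_i^2/\lambda_i^2$ controlling the error also constitutes $\norm{A^{-1}x}^2$; recognizing this lets \claimref{unnorm} cancel the target norm and recover the sharper requirement $\epsilon_1 = O(\delta/\kappa)$, which is exactly what is needed to obtain the stated $\kappa^2$ dependence rather than $\kappa^3$.
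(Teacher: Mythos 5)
Your proposal is correct and follows essentially the same route as the paper's proof: per-eigenvalue SVE error bounds, the observation that the inversion error is \emph{relative} to $\norm{A^{-1}x}$ so that \claimref{unnorm} cancels the target norm and $\epsilon_{1}=O(\delta/\kappa(A))$ suffices, and amplitude amplification contributing a multiplicative $O(\kappa(A))$ overhead. Your explicit $\Omega(1/\kappa(A))$ lower bounds on the good-branch amplitudes merely make precise a step the paper asserts without computation; otherwise the two arguments coincide.
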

\begin{proof} 
We first analyze matrix multiplication. The unnormalized solution state is $Ax= \sum_{i} \beta_{i} \lambda_{i} v_{i}$, while the unnormalized output of step 1 of the algorithm which performs SVE to precision $\epsilon_{1}$ is $z= \sum_{i} ( \lambda_{i} \pm  \widetilde{\epsilon}_{i}) \beta_{i} v_{i}$ such that $|\widetilde{\epsilon}_{i}| \leq \epsilon_{1}$ for all $i$. As the $v_{i}$ are orthonormal, we have 
$\norm{ Ax - z} \leq \epsilon_{1} \norm{x}$ and by Claim  \ref{unnorm}, we have $\norm{ \ket{Ax} - \ket{z}} \leq \frac{\sqrt{2}\epsilon_{1}\norm{x}}{\norm{Ax}}\leq \sqrt{2} \epsilon_{1} \kappa(A)$.

We next analyze linear systems. The unnormalized solution state is $A^{-1}x= \sum_{i} \frac{\beta_{i}}{\lambda_{i}} v_{i}$. 
The unnormalized output is $z= \sum_{i} \frac{\beta_{i}}{\lambda_{i} \pm \tilde{\epsilon}_{i}} v_{i}$ for $|\widetilde{\epsilon}_{i}| \leq \epsilon_{1}$. We have
the bound  
\begin{align*}
\norm{ A^{-1} x - z}^{2} &\leq   \sum_{i} \beta_{i}^{2} \en{ \frac{1}{\lambda_{i}} - \frac{1}{ \lambda_{i} \pm \tilde{\epsilon}_{i}} }^{2} \nl 
&\leq \epsilon_{1}^{2} \sum_{i} \frac{\beta_{i}^{2}}{\lambda_{i}^{2} (\lambda_{i} - \epsilon_{1})^{2} } \leq \frac{\epsilon_{1}^{2} \kappa^{2}(A) \norm{A^{-1} x}^{2}}{ (1-\kappa(A) \epsilon_{1})^{2}} \nl 
& \leq  4\epsilon_{1}^{2} \kappa^{2}(A) \norm{A^{-1} x}^{2}
\end{align*}
assuming 
that $\kappa(A) \epsilon_{1} \leq 1/2$.  Applying Claim  \ref{unnorm} we obtain $\norm{ \ket{A^{-1} x} -\ket{ z}} \leq  2\sqrt{2}\kappa(A) \epsilon_{1}$ for $\kappa(A) \epsilon_{1} \leq 1/2$. 

We can therefore use the SVE algorithm with precision $\epsilon_{1} = \frac{\delta} { \kappa(A)} $ for both cases to obtain a solution state $\norm{ \ket{\mathcal{A}x} - \ket{z}} \leq \delta$. The success probability for step 2 of the algorithm is $\frac{1}{\kappa^{2}(A)}$ for both matrix multiplication and matrix inversion. We apply Amplitude Amplification as in Theorem \ref{tampa} with the unitary $V$ that represents the first two steps of the algorithm to obtain an expected running time $\widetilde{O}(\frac{\kappa^2 (A) \mu(A)}{ \delta } )$. 
\end{proof}

Let us now see how the linear system solver in the QRAM data structure model compares to the HHL algorithm and its improvements. The HHL algorithm requires an input model where the transformation $O_{A}: \ket{i,j, 0} \to \ket{i,j, a_{ij}}$ can be carried out efficiently. The oracle $O_{A}$ can be implemented efficiently for matrix $A$ stored in the QRAM but also without the QRAM if the matrix $A$ is well structured. The QRAM data structure model thus makes a stronger assumption than the HHL input model. It is however illuminating to see the improvements one can obtain over HHL in the stronger QRAM data structure model. 

 A natural normalization for the quantum linear system problem is to assume that $\norm{A} = 1$, so that the eigenvalues being estimated have been scaled down to quantities in $[1/\kappa,1]$. We provide an algorithm to normalize the matrix $A$ such that $\norm{A} \leq 1$ in Section \ref{sne}. 

The HHL algorithm with this scaling produces an $\delta$ approximation to the state $\ket{A^{-1} b}$ and an estimate for $\norm { A^{-1} b}$ in time $\widetilde{O}( s^{2}(A) \kappa^{2}(A)/\delta)$ where $s(A)$ is 
an upper bound on the number of non-zero entries per row. Subsequent work has improved the running time to $\widetilde{O}( s(A) \kappa(A)\log(1/\delta))$ \cite{CKS15, A12}, the recent work 
\cite{CGJ18} implies similar improvements for linear system solvers in the QRAM data structure model. 

Quantum linear system solvers in the QRAM data structure model have sub-linear running time even for dense matrices. Instead of sparsity, their running time depends on the parameter $\mu(A)$. On one hand, this factor is smaller than the Frobenius norm, for which we have $\norm{A}_{F} = ( \sum_{i} \sigma_{i}^{2} )^{1/2} \leq \sqrt{rk(A)}$. Hence, Algorithm \ref{qmat} achieves a considerable speedup for matrices whose rank is poly-logarithmic in the matrix dimensions. Moreover, while for general dense matrices the sparsity is $\Omega(n)$, we have $\mu(A) \leq \norm{A}_{F} \leq \sqrt{n}$. The QRAM data structure based linear system solvers therefore achieve a worst case quadratic speedup over linear system solvers in the HHL input model \cite{WZP17}.

In addition, the factor $\mu(A)$ is smaller than the maximum $\ell_{1}$-norm $s_1(A)$ which is smaller than the maximum sparsity $s(A)$ for the normalization $\norm{A}_{max}=1$, that is the entries of $A$ have absolute value at most $1$. This normalization is used for the linear system solver \cite{CKS15} in the HHL input model, the QRAM data structure based linear system solver presented here is an improvement as $s_{1}(A) \leq s(A)$. 
An example where the linear system solver of Theorem \ref{lqmat} is an exponential improvement over previous approaches is for real-valued matrices with most entries close to zero and a few entries close to 1, for example  $A=I+ J/n$ or a small perturbation of a permutation matrix. We have $s(A)=\Omega(n)$, $\norm{A}_F=\omega(\sqrt{n})$ and $s_1(A)=O(1)$ for small enough perturbations.

Algorithm \ref{qmat} achieves a worst case quadratic speedup over linear system solvers in the HHL input model as there are matrices with $\norm{A}=1$ for which $\mu(A)=\Omega(\sqrt{n})$. An example is a random sign matrix $A$ with $\norm{A}=1$. In this case, one can easily show that with high probability $\mu(A) = \Omega(\sqrt{n})$. 

The optimal value $\mu(A) =\min_{\mu} \{ P \circ Q = A/\mu \;|\; \norm{p_{i}}_{2} \leq 1, \norm{q^{j}}_{2} \leq 1 \}$ in Theorem \ref{sveplus} is the spectral norm of $|A|$ where $|A|$ is the matrix obtained by replacing entries of $A$ by their absolute values \cite{M90}. We recall that the matrix $A \in \R^{m\times n}_{+}$ has unique positive left and right eigenvectors, these eigenvectors are called the Perron-Frobenius eigenvectors and can be computed for example by iterating  $x_{t+1} = Ax_{t}/\norm{Ax_{t}}$. The optimal walk can be implemented efficiently if the coordinates of the Perron-Frobenius eigenvectors for $|A|$ are stored in memory prior to constructing the data structure. 

However, there are no known algorithms for computing the entries of the Perron-Frobenius eigenvector in linear time in the number of matrix entries with a poly-logarithmic update time. Hence the optimal walk cannot be implemented in the quantum data structure model. We also note that the spectral norm of $|A|$ can be much larger than the spectral norm of $A$, for example for a random sign matrix with $\pm 1$ entries the spectral norm of $A$ is $\norm{A}= O(\sqrt{n})$ but the spectral norm of $|A|$ is $\norm{|A|}=n$. Thus there are two interesting questions about quantum linear system solvers. The first is to find the optimal quantum linear system solver in the QRAM data structure model. The second is to find if there can exist more general quantum walk algorithms with $\mu(A)= \norm{A}$. The latter has also been stated as an open problem \cite{C10}.

We also note that as in the analysis of the HHL algorithm \cite{HHL09}, the parameter $\kappa$ does not have to be as big as
the actual condition number of $A$. If $\kappa$ is smaller, then it means that we invert only the well-conditioned part of the matrix
and this may be useful for some settings.

\subsubsection{Spectral Norm Estimation} \label{sne} 
We assumed above that the matrices $A$ are normalized such that the absolute value of the eigenvalues 
lie in the interval $[1/\kappa, 1]$. We provide here a quantum algorithm using the QRAM data structure given in \cite{KP16} for estimating the spectral norm, which can be used to rescale matrices so that the assumption $\norm{A} \leq 1$ is indeed valid. Note that $0 \leq \frac{ \norm{A}} {\norm{A}_{F}} \leq 1$ and that $\norm{A} = \sigma_{\max}(A)$, where 
$\sigma_{\max}$ is the largest singular value. 

\begin{algorithm}[H]
\caption{Spectral norm estimation.} \label{algsne}
\begin{algorithmic}[1]
\REQUIRE $A \in \R^{m\times n}$ stored in data structure given in \cite{KP16}. Returns an estimate for 
$\eta:= \norm{A}/\norm{A}_{F}$ with additive error $\epsilon$.  \\

\begin{enumerate} 
\item Let $l=0$ and $u=1$ be upper and lower bounds for $\eta$, the estimate $\tau = (l+u)/2$ is refined using binary search in steps 2-5
over $O(\log 1/\epsilon)$ iterations.  

\item Prepare $\ket{\phi} = \frac{1}{\norm{A}_{F}} \sum_{i,j} a_{ij} \ket{i, j} = \frac{1}{\norm{A}_{F}} \sum_{i,j} \sigma_{i}  \ket{u_{i}, v_{i}}$
and perform SVE \cite{KP16} with precision $\epsilon$ to obtain $\frac{1}{\norm{A}_{F}} \sum_{i,j} \sigma_{i}  \ket{u_{i}, v_{i}, \overline{\sigma_{i}} }$.  
where $|\overline{\sigma}_{i} - \frac{\sigma_{i}}{\norm{A}_{F}}| \leq \epsilon$. 

\item Append single qubit register $\ket{R}$ and set it to $\ket{1}$ if $\overline{\sigma}_{i} \geq \tau$ and $\ket{0}$ otherwise. Uncompute the SVE output from step 2. 

\item Perform amplitude estimation on $\frac{1}{\norm{A}_{F}} \sum_{i,j} \sigma_{i}  \ket{u_{i}, v_{i}, R}$ conditioned on $R=1$ 
to estimate $\sum_{i: \overline{\sigma}_{i} \geq \tau} \sigma_{i}^{2}/\norm{A}_{F}^{2}$ to relative error $(1\pm \delta)$.

\item If estimate in step 4 is $0$ then $u \to \tau$ else $l \to \tau$. Set $\tau = (u+l)/2$. 
\end{enumerate} 
\end{algorithmic}
\end{algorithm}
\noindent The following proposition proves correctness for Algorithm \ref{sne} and bounds its running time. Algorithm \ref{sne} provides an estimate for $\norm{A}/\norm{A}_{F}$, 
this yields an estimate for $\norm{A}$ as one can easily compute $\norm{A}_{F}$ when $A$ is being efficiently loaded into the QRAM. 
\begin{proposition} 
Algorithm \ref{sne} estimates $\norm{A}$ to error $\epsilon \norm{A}_{F}$ in time $\widetilde{O}( \log (1/\epsilon)/ \epsilon \eta)$.  
\end{proposition} 
\begin{proof} 
We show that the algorithm estimates $\eta$ within additive error $\epsilon$. In order to prove this, it suffices to show that if $|\tau - \eta|\geq \epsilon$ then the algorithm determines $sign(\tau - \eta)$ correctly.  If $|\tau - \eta|\geq \epsilon$ then the amplitude $\sum_{i: \overline{\sigma}_{i} \geq \tau} \sigma_{i}^{2}/\norm{A}_{F}^{2}$ being estimated in step 4 is either $0$ or at least $\eta^{2}$. Amplitude estimation as in Theorem \ref{tampa} yields a
non-zero estimate in the interval $(1\pm \delta)\eta$ for the latter case and thus the sign is determined correctly for constant $\delta$. 

The running time for step 2 is $\widetilde{O}(1/\epsilon)$. The amplitude estimation in step 4 requires time  $\widetilde{O}(1/\epsilon\delta \eta)$
as $T(U)= \widetilde{O}(1/\epsilon)$ and the amplitude being estimated is either $0$ or at least $\eta^{2}$. Step 4 
is repeated $\log(1/\epsilon)$ times and $\delta$ is a fixed constant, so the running time is $\widetilde{O}( \log (1/\epsilon)/ \epsilon \eta )$. 
\end{proof} 
\noindent The above algorithm can also be used to find an estimate for the condition number $\kappa$, that can be used in the linear systems solver. Let $\kappa'$ be a threshold such that $\sum_{\sigma_{i} \leq \kappa'} \sigma_{i}^{2}/\norm{A}_{F}^{2} \geq \eta^{2}$, then the algorithm (with $\ket{R}$ set to $1$ if $\overline{\sigma_{i}} \leq \tau$ in step 3) provides an additive error $\epsilon$ estimate for $\kappa'$ in time $\widetilde{O}( \log (1/\epsilon)/ \epsilon \eta)$. The spectral norm estimation procedure can be viewed as an auxiliary subroutine that is used once and thus contributes an additive term to the running time for the quantum linear system solver. We do not include this additive overhead in our running time estimates.

\subsection{The iterative step} \label{istep}  
We now show that the $(\epsilon, \delta)$-approximate quantum step for the iterative method in Definition \ref{31} can be implemented using the quantum matrix multiplication algorithm presented above. The matrix $S$ for iterative methods corresponding to linear systems and least squares is of the form $S=I - \alpha A$ for a positive semidefinite matrix $A$. Further, we can assume that the matrix $A$ is stored in the data structure of Theorem \ref{simps} and that $S$ is positive semi-definite and contractive. 

\begin{proposition} \label{iterimp} 
An $(\epsilon, 1/poly(n))$-approximate quantum step (as in Definition \ref{31}) for the iterative method with $S=I - \alpha A, L(x)=b-Ax$,  $\alpha\leq 1$, $\norm{b}=1$ and for $A| b$ stored in the data structure of Theorem \ref{simps},
can be implemented in time $\widetilde{O}( \mu(A | b) / \epsilon)$, where $A | b$ is the matrix $A$ with an extra row equal to $b$.  
\end{proposition}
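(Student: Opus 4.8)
The plan is to realize the unitary $V$ of Definition \ref{31} as the SVE-based matrix multiplication of Algorithm \ref{qmat} (steps $1$ and $2$ only, i.e.\ singular value estimation followed by a conditional rotation and an uncomputation, but \emph{without} the final amplitude amplification), applied with the operator $S=I-\alpha A$ on the generic branches and with the affine operator $L$ on the first branch. The amplification is deliberately omitted inside $V$, since it is performed once at the level of the whole iterative method by the unitary $Q$; this is exactly why a single step will cost only the SVE time $\widetilde{O}(\mu/\epsilon)$ rather than $\widetilde{O}(\kappa\mu/\epsilon)$. The first observation is that $S=I-\alpha A$ shares its eigenvectors $v_i$ with $A$ and has eigenvalues $1-\alpha\lambda_i$. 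Since $A\succeq 0$ is contractive ($\lambda_i\in[0,1]$) and $\alpha\leq 1$, we have $1-\alpha\lambda_i\in[0,1]$, so $S$ is positive semidefinite and contractive and, crucially, the numbers $1-\alpha\lambda_i$ are valid rotation amplitudes. All the operations below (conditional rotation, uncomputation, and incrementing the time register) are manifestly unitary, so $V$ is norm preserving as required, with the garbage flag absorbing the norm deficit.

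For a generic branch $\ket{t}\norm{r_t}\ket{r_t}\ket{0}$ with $t\geq 1$, first I would run SVE for $A$ (Theorem \ref{sveplus1}) to precision $\epsilon$ on the incoming branch $\sum_i\gamma_i\ket{v_i}=\norm{r_t}\ket{r_t}$, where $r_t=\sum_i\gamma_i v_i$, producing $\sum_i\gamma_i\ket{v_i}\ket{\overline{\lambda}_i}$; then perform the conditional rotation $\ket{\overline{\lambda}_i}\mapsto(1-\alpha\overline{\lambda}_i)\ket{0}+\sqrt{1-(1-\alpha\overline{\lambda}_i)^2}\,\ket{1}$ into the flag qubit, uncompute the SVE register, and increment $\ket{t}\mapsto\ket{t+1}$. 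The $\ket{0}$ branch then becomes $\sum_i\gamma_i(1-\alpha\overline{\lambda}_i)\ket{v_i}=\norm{\tilde S(r_t)}\ket{\tilde S(r_t)}$ with $\tilde S(r_t)=\sum_i\gamma_i(1-\alpha\overline{\lambda}_i)v_i$, matching the second line of Definition \ref{31}. By orthonormality of the $v_i$ and $\norm{r_t}\leq\norm{r_0}=1$, the error is $\norm{S(r_t)-\tilde S(r_t)}=\alpha\,(\sum_i\gamma_i^2(\lambda_i-\overline{\lambda}_i)^2)^{1/2}\leq\alpha\epsilon\leq\epsilon$, and SVE succeeds with probability $\geq 1-\delta$ for $\delta=1/\mathrm{poly}(n)$. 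This branch costs $\widetilde{O}(\mu(A)/\epsilon)$.

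The main technical point is the first branch $\ket{0}\ket{r_0}\ket{0}$, where $L(r_0)=b-Ar_0$ is affine and the two pieces $b$ and $-Ar_0$ must be combined coherently and with the correct relative norm. I would realize $L$ as a single matrix multiplication by the symmetric augmented matrix
\[
M=\begin{pmatrix} -A & b \\ b^{T} & 0 \end{pmatrix},
\]
applied to the normalized input encoding $(r_0,1)$, since $M\,(r_0;1)=(b-Ar_0;\,b^{T}r_0)$ carries the desired vector $L(r_0)$ in its top $n$ coordinates and only a harmless extra coordinate $b^{T}r_0$ below. The matrix $M$ is obtained by storing $A$ together with the extra row (hence, by symmetry, column) $b$ in the data structure of Theorem \ref{dsplus}, so that running SVE and the conditional rotation for $M$, scaled so that the $\ket{0}$ branch carries $\alpha\norm{\tilde L(r_0)}\ket{\tilde L(r_0)}$ (a valid amplitude because $\alpha\norm{\tilde L(r_0)}\leq 1$), yields the first line of Definition \ref{31}. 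The relevant SVE normalization is governed by the augmentation of $A$ with $b$, namely $\mu(A|b)$, and since matrix multiplication incurs only a linear error $O(\epsilon)$ in the SVE precision (Theorem \ref{lqmat}) and $\norm{(r_0,1)}=\sqrt{2}=O(1)$, choosing the precision proportional to $\epsilon$ gives $\norm{L(r_0)-\tilde L(r_0)}\leq\epsilon$ with success probability $\geq 1-\delta$ and cost $\widetilde{O}(\mu(A|b)/\epsilon)$.

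Finally, since every operation is controlled on the time register $t$ (the first branch using $M$, the generic branches using $A$), the two constructions assemble into the single unitary $V$; as $\mu(A)\leq\mu(A|b)$, the first branch dominates and the overall cost is $\widetilde{O}(\mu(A|b)/\epsilon)$ with the stated $(\epsilon,\delta)$ guarantee. I expect the affine first step to be the crux: turning $b-Ar_0$ into one coherent matrix multiplication with the correct normalization is exactly what forces the augmented matrix and is responsible for the appearance of $\mu(A|b)$ rather than $\mu(A)$ in the running time, while the remaining norm bookkeeping and error bounds are routine given Theorem \ref{lqmat} and Theorem \ref{sveplus1}.
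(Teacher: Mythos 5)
Your overall architecture is the same as the paper's: the linear branches are implemented exactly as in the paper's proof (SVE for $A$, conditional rotation by $1-\alpha\overline{\lambda}_i$, no amplitude amplification inside $V$, error bound $\alpha\epsilon\leq\epsilon$ by orthonormality, failure probability $1/\mathrm{poly}(n)$), and the affine branch via an augmentation of $A$ by the row $b$, which is where $\mu(A|b)$ enters. However, there is a genuine gap in your affine step. Your matrix $M=\begin{pmatrix} -A & b\\ b^{T} & 0\end{pmatrix}$ gives $M(r_0;1)=(b-Ar_0;\,b^{T}r_0)$, and the last coordinate $b^{T}r_0$ is \emph{not} harmless: it lands in the $\ket{0}$ (success) branch, so the state you actually produce there is proportional to $(b-Ar_0,\,b^{T}r_0)$ rather than to $L(r_0)$ padded with a zero. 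Since $|b^{T}r_0|$ can be as large as $\norm{b}\norm{r_0}=1$ (take $r_0$ nearly parallel to $b$), the requirement $\norm{L(r_0)-\tilde L(r_0)}\leq\epsilon$ of Definition \ref{31} fails by a constant, and this spurious component then propagates through every subsequent application of $\tilde S$ and into $\tilde\theta_\tau$. The paper avoids this with a two-stage construction: first the rectangular $A_1=\begin{pmatrix} -A & b\\ 0 & 0\end{pmatrix}$, whose \emph{zero bottom row} guarantees $A_1x_1=(b-Ax,0)$ exactly for $x_1=(x,1)$, and only then the standard doubling $A'=\begin{pmatrix} 0 & A_1\\ A_1^{t} & 0\end{pmatrix}$ applied to $x'=(0^{n+1},x_1)$, so that $A'x'=(A_1x_1,0)=(b-Ax,0,0^{n+1})$ with no contamination. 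Your version is salvageable with one extra operation --- after the multiplication, conditioned on the ancillary coordinate being $\ket{n+1}$, flip the flag qubit so the $b^{T}r_0$ component is absorbed into the garbage state $\ket{G_1}$ --- but as written the assertion ``harmless extra coordinate'' is false and the error guarantee does not go through.

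Two smaller omissions. First, both your $M$ and the paper's $A'$ are symmetric but not positive semidefinite (the doubled matrix has eigenvalues $\pm\sigma_i$), whereas SVE returns only magnitudes and Theorem \ref{lqmat} is stated for psd contractions; one needs the sign-recovery procedure described in the paper's footnote (comparing SVE estimates for the matrix and for the matrix shifted by $\mu I$), which the paper's proof invokes implicitly via ``the SVE algorithm for symmetric matrices.'' Second, you assert $\mu(M)=O(\mu(A|b))$ without justification; the paper substantiates the analogous bound by computing $\norm{A'}_F\leq 2\norm{A}_F+2\norm{b}$ and $s_1(A')\leq\max(s_1(A)+\norm{b}_{\infty},\,s_1(b))$, and a check of the same kind is needed for your $M$ (it does hold, e.g.\ $\norm{M}_F^{2}=\norm{A}_F^{2}+2\norm{b}^{2}$, but it should be stated).
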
 

\begin{proof} 
We show how to implement the unitary $V$ that implements the $(\epsilon, 1/poly(n))$-approximate quantum step (as in Definition \ref{31}) for the iterative method, that is 
\als{ 
V &: \ket{0} \ket{r_0} \ket{0}\rightarrow \ket{1} \left( \alpha \norm{\tilde{L}(r_0)} \ket{\tilde{L}(r_0)} \ket{0} + \ket{G_1}\ket{1} \right) \nl 
&:\ket{t} \norm{r_t} \ket{r_t} \ket{0} \rightarrow  \ket{t+1} \left( \norm{\tilde{S}(r_{t})} \ket{\tilde{S}(r_{t})}\ket{0} + \ket{G_{t+1}}\ket{1} \right),
} 
where $\ket{G_{1}}, \ket{G_{t+1}}$ are unnormalised garbage states, such that with probability $\geq 1-\delta$, it holds that 
$\norm{ L(r_0) - \tilde{L}(r_0)} \leq \epsilon $ and we also have that $\norm{ S(r_{t}) -  \tilde{S}(r_{t}) } \leq \epsilon$ for all $t \in [\tau-1]$ with probability $\geq 1-\delta$. We first implement the linear part 
of $V$ that corresponds to $1 \leq t \leq \tau-1$ and then the affine part corresponding to $t=0$. We denote $r_t=\sum_{i} \beta_{i} \ket{v_{i}}$ in the basis of right singular vectors for $S$.

The linear part of $V$ is implemented by performing the singular value estimation for $A$ and then using $\overline{ \lambda_{i}} =(1-\alpha \overline{\lambda_{i}} (A))$ as estimates 
for singular values of $S$ with additive error $\epsilon$, 
\all { 
&\ket{t} \norm{r_t}\ket{r_t} \ket{0} \equiv \ket{t} \sum_{i} \beta_{i} \ket{v_{i}}  \ket{0} \to \nl 
&\ket{t+1} \sum_{i} \beta_{i} \ket{v_{i}}  \en{ \overline{\lambda_{i} }  \ket{0} + \sqrt{1- {\overline{\lambda_{i}}^2}} \ket{1} }   
} {istep1} 
As the precision for singular value estimation is $\epsilon$, the algorithm runs in time $\widetilde{O}( \mu(A) /\epsilon)$.  
For bounding the difference of the norms we observe that $\norm{ S(r_{t}) -  \tilde{S}(r_{t}) } \leq \norm{\sum_{i} \beta_{i} \tilde{\epsilon_{i}} v_{i} } \leq \alpha \epsilon \leq \epsilon$ as all the errors $\tilde{\epsilon_{i}} \leq \epsilon$ if the singular value estimation succeeds and $\norm{\beta} \leq 1$. The procedure succeeds if the singular value estimation algorithm produces the correct estimates. The success probability for the singular value estimation is $1-1/poly(n)$, thus $\delta$ can be taken to be $1/poly(n)$.

The affine part of $V$ is implemented as follows. Let $A_1= \en{ \begin{matrix} -A & b \nl 0& 0\end{matrix} }$ and $x_1= (x, 1)$ so that $A_1 x_1= (b-Ax, 0)$. Then we symmetrize $A_1$ by defining $A'= \left (
\begin{matrix} 
0 &A_1 \\ 
A_1^{T}  &0  
\end{matrix} \right )$ and $x'= (0^{n+1},x_1) $, and have $A' x' = (A_1 x_1, 0) = (b-Ax,0)$. 
The columns of $A'$ are stored in the memory data structure so we can perform SVE for $A'$. We take $x=r_0$ and use the SVE algorithm for symmetric matrices on $r'_0$ (where we add an extra factor $\alpha$ in the conditional rotation) to map it to $A' r_{0}'= (b - Ar_{0},0)$ as the last coordinates become $0$. Denote $r'_0 =  \sum_{i} \beta_{i} \ket{v'_{i}}$, where $v'_{i}$ are the eigenvectors of $A'$.
\als{ 
&\ket{0} \norm{r'_0}\ket{r_0'} \ket{0} \equiv \ket{0} \sum_{i} \beta_{i} \ket{v'_{i}}  \ket{0} \to \nl 
& \ket{1} \sum_{i} \beta_{i} \ket{v_{i}}  \en{ \alpha \overline{\lambda_{i} }(A')  \ket{0} + \sqrt{1- {\alpha^{2} \overline{\lambda_{i}}(A')^2}} \ket{1} }   
} 
If the precision for singular value estimation is $\epsilon$ then the algorithm runs in time $\widetilde{O}( \mu(A') /\epsilon)$ and the correctness analysis is the same as above. We next provide an upper bound for $\mu(A')$. We have $\norm{A'}_F \leq 2\norm{A}_F+2\norm{b}$, while $s_1(A') \leq \max(s_1(A)+\norm{b}_{\infty} ,s_1(b))$. Let's assume for simplicity that $\norm{b}=1$, which is the case in our applications, 
then the upper bound is $O(\mu(A|b))$ where $A|b$ is a matrix obtained by adding an extra row $b$ to $A$.

\end{proof}

Finally, let us see how to implement the $(\epsilon, \delta)$-approximate quantum multistep unitary $U$ in Claim \ref{U}  with cost $O(C_V+\log \tau)$ which is asymptotically the same as the cost of $V$. 
\begin{proposition} \label{48} 
Given an implementation of the unitary $C_{V}$  in Proposition \ref{iterimp}, an $( \epsilon, 1/poly(n))$-approximate quantum multistep step (as in Claim \ref{U}) for the iterative method 
can be implemented with cost $O(C_V+\log \tau)$. 
\end{proposition} 
\begin{proof} 
It is easy to see that the complexity of $U$ is asymptotically upper bounded by the complexity of applying the unitary $V^\tau$ on $\ket{0}\ket{r_0}\ket{0}$. For this we first apply $V$ once to get the affine transformation (with running time $\widetilde{O}( \mu(A|b) /\epsilon)$), then we apply the SVE procedure on $A$ to obtain estimates $\overline{\lambda_i}$ of the eigenvalues of $S=(I-\alpha A)$ and then compute $\overline{\lambda_{i}}^{\tau-1}$ (in time $O(\log \tau)$) as the estimates for the singular values of $S^{\tau-1}$ for the conditional rotation. This gives us the unitary $U$ in Claim \ref{U} with error $t\epsilon$ for the $t$-th step. The running time of the second part is $\widetilde{O}( \mu(A) /\epsilon+\log \tau)$. 
As $\mu$ is monotone over the addition of columns, the overall running time is $O(C_V+\log \tau)$. 

\end{proof}

\section{Quantum iterative algorithms} 
\subsection{Linear systems} \label{ls} 
Let $A \succeq 0$ be a $n\times n$ positive semidefinite matrix with eigenvalues in the interval $[1/\kappa, 1]$ and let $b \in \R^{n}$ with $\norm{b}=1$. We assume that $b$ is stored in a QRAM data structure. The goal is to solve the linear system $A\theta=b$. 

The classical iterative method starts with the observation that the quadratic form $F(\theta)=\theta^{T}  A \theta - b \theta$ is minimized at the solution to $A \theta=b$. The algorithm starts with an arbitrary $\theta_{0}$ and applies the following updates,   
\all{ 
\theta_{t+1} = \theta_{t} + \alpha ( b - A\theta_{t}) = \theta_{t} + \alpha r_{t} 
}{plus} 
where the step size $\alpha$ will be a small constant and the residuals $r_{t}:= b- A\theta_{t}$ for $t\geq 0$. 
The residuals satisfy the recurrence $r_{t+1} = b - A(\theta_{t} + \alpha r_{t}) = (I- \alpha A) r_{t}$ and the initial condition $r_{0} =(b - A\theta_{0})$.

The convergence analysis and the choice of the step size $\alpha$ follow from the following classical argument. Let $\theta^{*}=A^{-1}b$ be the optimal solution, the error $e_{t}:= \theta_{t} - \theta^{*}$ satisfies the recurrence $e_{t+1} = (\theta_{t+1} - \theta^{*}) = \theta_{t} - \theta^{*} + \alpha (b - A\theta_{t})
= e_{t} + \alpha A(\theta^{*} - \theta_{t})= (I- \alpha A) e_{t}$. After $t$ steps of the iterative method, 
\[ 
\norm{e_{t}}= \norm{ (I- \alpha A)^{t}   e_{0} } \leq (1- \alpha/\kappa)^{t}  \norm{e_{0}}.
\] 
The method therefore converges to the optimal solution $\theta^{*}$ within error $\epsilon$ in $\tau = O( \kappa \log (\norm{e_{0}}/\epsilon) /\alpha)$
iterations. The step size $\alpha$ can be fixed to be a small constant say $\alpha = 0.01$ and the starting point $\theta_{0}$ chosen to be a unit vector so that 
$\norm{e_{0}} \leq \kappa$. With these choices the number of iterations required for convergence within error $\epsilon$ is $O(\kappa \log (\kappa/\epsilon))$. 

The classical iterative method can be viewed as gradient descent with affine updates. We obtain a quantum iterative method for this problem using Proposition \ref{iterimp}. 

\begin{theorem} \label{qlinsys} 
Given positive semidefinite $A \in \R^{n\times n}, b \in \R^{n}$ stored in the data structure of Theorem \ref{simps}, there is an iterative quantum algorithm that outputs a state $\ket{z}$ such that $\norm{\ket{z} - \ket{A^{-1}b}} \leq \delta$ with expected running time $O(\frac{\kappa(A)^3 \log^3 \frac{\kappa(A)}{\delta} \mu(A | b)}{\delta})$.
\end{theorem}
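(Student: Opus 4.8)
The plan is to instantiate the general quantum iterative framework of Section~\ref{analysis} with the specific linear update arising from gradient descent on $F(\theta) = \theta^{T} A \theta - b^{T}\theta$, and then to balance two independent sources of error, the classical convergence error and the quantum approximation error, against the running time.

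First I would fix the classical dynamics. The update $\theta_{t+1} = \theta_t + \alpha(b - A\theta_t)$ has residuals $r_t = b - A\theta_t$ obeying $r_{t+1} = (I - \alpha A) r_t$, so the contractive linear operator of Definition~\ref{31} is $S = I - \alpha A$ and the affine operator is $L(x) = b - Ax$. With the eigenvalues of $A$ in $[1/\kappa, 1]$ and a small constant step size $\alpha$, the error $e_t = \theta_t - A^{-1}b$ satisfies $e_{t+1} = (I - \alpha A) e_t$, hence $\norm{e_t} \leq (1 - \alpha/\kappa)^t \norm{e_0}$. Choosing $\theta_0$ to be a unit vector gives $\norm{e_0} \leq \kappa$, so after $\tau = O(\kappa \log(\kappa/\delta))$ steps the classical iterate is within $\delta$ of the true solution. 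I would also record that $\norm{\theta_\tau} \geq \norm{A^{-1}b} - \delta \geq 1 - \delta = \Omega(1)$, since $b$ is a unit vector and the eigenvalues of $A^{-1}$ are at least $1$; this is exactly the hypothesis that the running-time analysis of Section~\ref{itertime} requires.

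Next I would supply the quantum step. By Proposition~\ref{iterimp}, the $(\epsilon,\delta)$-approximate quantum step $V$ for $S = I - \alpha A$ and $L(x) = b - Ax$ can be implemented in time $C_V = \widetilde{O}(\mu(A|b)/\epsilon)$, with $\delta = 1/\mathrm{poly}(n)$ inherited from the success probability of SVE. The key structural observation is that here $C_U = O(C_V + \log\tau)$: rather than applying $V$ sequentially $t$ times, one runs SVE once to estimate the eigenvalues $\overline{\lambda}_i$ of $A$ and then applies a single conditional rotation by $\overline{\lambda}_i^{t-1}$ as the estimate for the singular values of $S^{t-1}$, so powering $S$ costs only an extra $O(\log\tau)$. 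Feeding this into the bounds of Section~\ref{itertime}, the overall state error is $O(\tau^2 \epsilon)$ and the expected running time is $O(\tau C_U) = O(\tau \mu(A|b)/\epsilon)$, the latter relying on $\norm{\tilde{\theta}_\tau} = \Omega(1)$ so that the amplitude-amplification factor $T = (\tau+1)/\norm{\tilde{\theta}_\tau}$ is $O(\tau)$.

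Finally I would tune the parameters. Setting $\epsilon = O(\delta/\tau^2)$ forces the quantum approximation error $\norm{\ket{\theta_\tau} - \ket{\tilde{\theta}_\tau}}$ below $\delta$ by Claim~\ref{unnorm}, using $\norm{\theta_\tau} = \Omega(1)$ to keep the normalization well-conditioned; combined by the triangle inequality with the classical convergence error $\norm{\ket{\theta_\tau} - \ket{A^{-1}b}} \leq \delta$, this yields the claimed $2\delta$ bound. The running time then becomes $O(\tau \mu(A|b)/\epsilon) = O(\tau^3 \mu(A|b)/\delta)$, and substituting $\tau = \kappa \log(\kappa/\delta)$ gives $O(\kappa^3 \log^3(\kappa/\delta)\,\mu(A|b)/\delta)$. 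I expect the main subtlety to be the simultaneous bookkeeping of the two error contributions: the normalization step of Claim~\ref{unnorm} stays well-conditioned only because $\norm{\theta_\tau} = \Omega(1)$, so establishing that the true solution has norm at least one, and that the iterate does not drop below it, is the linchpin of the argument.
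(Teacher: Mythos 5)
Your proposal is correct and follows essentially the same route as the paper: the same classical convergence analysis for $S = I-\alpha A$ giving $\tau = O(\kappa\log(\kappa/\delta))$ with $\norm{\theta_\tau} = \Omega(1)$, the same invocation of Proposition~\ref{iterimp} with the eigenvalue-powering trick yielding $C_U = O(C_V + \log\tau)$, and the same parameter choice $\epsilon = O(\delta/\tau^2)$ to split the $2\delta$ error between classical convergence and quantum approximation. Your closing remark correctly identifies $\norm{\theta_\tau}\geq 1-\delta$ as the hypothesis that keeps both the normalization in Claim~\ref{unnorm} and the amplitude-amplification overhead under control, which is exactly the role it plays in the paper's argument.
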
   
\begin{proof} 
An $(\epsilon, 1/poly(n))$ approximate iterative step for the classical iterative method described above can be implemented in time $C_{V}=\widetilde{O}( \mu(A | b) /\epsilon)$ using Proposition \ref{iterimp}. The cost $C_{U}$ of the unitary in Theorem \ref{aim} is the cost of implementing the powers $V^t$ by Proposition \ref{48}, in this case this cost is the same as $C_{V}$. This is because we do not have to apply $V$ sequentially $t$ times, but once the SVE has estimated the eigenvalues, we can directly perform the conditional rotations by an angle proportional to the $t$-th power of each eigenvalue.

Theorem \ref{aim} shows that given an $(\epsilon, 1/poly(n))$ approximate step and for constant $\alpha, \norm{\tilde{\theta}_\tau}$, the 
quantum iterative method has error $\norm{\ket{\theta_\tau} - \ket{\tilde{\theta}_\tau}} = O(\tau^2 \epsilon)$ and requires time $O(\tau C_{U})= O(\tau C_V) = O(\frac{ \tau \mu(A|b)}{\epsilon} )$. We take $\epsilon = O(\frac{\delta}{\tau^2})$ in order to have $\norm{\ket{\theta_\tau} - \ket{\tilde{\theta}_\tau}} \leq \delta/2$  for some $\delta>0$. The running time bound follows as we showed that $\norm{\ket{\theta_\tau} - \ket{A^{-1}b} }  \leq \delta/2$ 
for $\tau = \kappa(A) \log \frac{\kappa(A)}{\delta}$. 

In order to complete the proof, it remains to show that $\norm{\theta_{\tau}}=\Omega(1)$ for $\tau = O(\kappa \log (\kappa/\epsilon))$. The solution $A^{-1}b$ to the linear system has norm at least $1$ as $b$ is a unit vector and the eigenvalues of $A^{-1}$ are greater than $1$. After $\tau$ steps we have $\norm{ \theta_{\tau}  - \theta^{*} } \leq \epsilon \Rightarrow \norm{\theta_{\tau} }   \geq  \norm{\theta^{*} } -\epsilon \geq 1 - \epsilon$.

\end{proof}

\subsection{Weighted Least Squares}\label{wls}
For the weighted least squares problem, we are given a matrix $X \in \R^{m\times n}$ and a vector $y\in \R^{m}$, as well as a vector $w \in \R^{m}$ of weights,  and the goal is to find $\theta \in \R^{n}$ that minimizes the 
squared loss $\sum_{i \in [m]} w_{i} (y_{i} - x_{i}^{T}  \theta)^{2}$. The closed form solution is given by, 
\[
\theta = (X^{T}  W X)^{-1} X^{T}  W y 
\]  
and thus the problem can also be solved using a direct method. However, the direct method is prohibitive for large datasets and iterative methods are used instead as they do not involve solving large linear systems. The iterative method for weighted least squares is a gradient descent algorithm with the update rule $\theta_{t+1}= \theta_{t} + \rho \sum_{ i \in [m]} w_{i} ( y_{i} - \theta_{t}^{T}  x_{i}) x_{i}$ which in matrix form can be written as,  
\all{ 
\theta_{t+1} = (I -  \rho X^{T} WX) \theta_{t}  + \rho X^{T}  W y 
} {owls} 
The quantum iterative method for the weighted least squares problem is given by the following theorem. 
\begin{theorem} \label{qwlsq} 
Let $X \in \R^{m \times n}, y \in \R^{n}, w \in \R^{m}$ be stored in the data structures in Theorem \ref{simps}. Let $W=diag(w)$, $A= X^{T} WX$ and $b= X^{T}  W y$, then there is an iterative quantum algorithm that outputs a state $\ket{z}$ such that $\norm{\ket{z} - \ket{A^{-1}b}} \leq \delta$ with expected running time $O(\frac{\kappa(A)^3 \log^3 \frac{\kappa(A)}{\delta} \mu( \sqrt{W} X | y)}{\delta})$.
\end{theorem}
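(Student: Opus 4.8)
The plan is to reduce the weighted least squares problem to the positive semidefinite linear system already solved in \secref{ls} and \thmref{qlinsys}, and then argue that the single quantum step can still be implemented efficiently even though the matrix $A=X^{T}WX$ and the vector $b=X^{T}Wy$ are not themselves stored in memory. The starting observation is that the update rule $\theta_{t+1}=(I-\rho X^{T}WX)\theta_{t}+\rho X^{T}Wy$ has exactly the affine form $\theta_{t+1}=\theta_{t}+\rho r_{t}$ with residual $r_{t}=b-A\theta_{t}$ for $A=X^{T}WX\succeq 0$ and $b=X^{T}Wy$. Hence $S=I-\rho A$ and $L(\theta)=b-A\theta$, which is identical to the linear system setup, with $\rho$ playing the role of $\alpha$. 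Consequently the convergence analysis of \secref{ls} carries over verbatim: under the normalization that the eigenvalues of $A$ lie in $[1/\kappa(A),1]$, the classical iteration reaches error $\delta$ in $\tau=O(\kappa(A)\log(\kappa(A)/\delta))$ steps, $\norm{\theta_{\tau}}=\Omega(1)$ (by the same argument that $\norm{A^{-1}b}\geq 1$ when $\norm{b}=1$), and by \secref{itertime} the total cost is $O(\tau C_{V})$ once an $(\epsilon,\delta)$-approximate quantum step of cost $C_{V}$ is available.

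The main work, and the main obstacle, is implementing this quantum step without ever forming $A=X^{T}WX$ explicitly, since materializing $A$ would be as expensive as a classical solution. The key is the factorization $A=B^{T}B$ with $B=\sqrt{W}X$, together with $b=X^{T}Wy=B^{T}(\sqrt{W}y)$. I would therefore store $B=\sqrt{W}X$ in the data structure of \thmref{dsplus} (storing the scaled entries $\sqrt{w_{i}}\,x_{ij}$) rather than $A$. Since $A=B^{T}B$, the eigenvalues of $A$ are exactly the squared singular values of $B$ and the eigenvectors of $A$ are the right singular vectors of $B$. Thus singular value estimation for $A$ can be performed by running the SVE of \thmref{sveplus1} for $B$, obtaining an estimate $\overline{\sigma}_{i}$ and outputting $\overline{\lambda}_{i}=\overline{\sigma}_{i}^{2}$; because all quantities are normalized to $[0,1]$, $|\overline{\sigma}_{i}-\sigma_{i}|\leq \epsilon'$ yields $|\overline{\lambda}_{i}-\lambda_{i}|=O(\epsilon')$, so estimating eigenvalues of $A$ to precision $\epsilon$ costs only a constant factor more than SVE for $B$, namely $\widetilde{O}(\mu(\sqrt{W}X)/\epsilon)$.

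With SVE for $A$ in hand, the linear part of the step $V$ (multiplication by $\tilde S=I-\rho A$) is implemented exactly as in \propref{iterimp}, by a conditional rotation using the estimates $1-\rho\overline{\lambda}_{i}$. For the affine part one must produce $L(r_{0})=b-Ar_{0}=B^{T}(\sqrt{W}y-Br_{0})$ using only $B$ and $\sqrt{W}y$; I would handle this with the same symmetrization trick as in \propref{iterimp}, appending the vector $\sqrt{W}y$ as an extra row to $B$ and forming the symmetrized matrix so that a single SVE-based multiplication produces the desired residual. The matrix parameter governing all of these walks is then $\mu(\sqrt{W}X\,|\,y)$, the analogue of $\mu(A|b)$, giving step cost $C_{V}=\widetilde{O}(\mu(\sqrt{W}X\,|\,y)/\epsilon)$.

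Finally, I would assemble the running time exactly as in \thmref{qlinsys}. Taking $\epsilon=O(\delta/\tau^{2})$ ensures the final non-normalized error is $O(\tau^{2}\epsilon)=O(\delta)$, and since the step size and $\norm{\tilde\theta_{\tau}}$ are $\Theta(1)$, \claimref{unnorm} converts this into $\norm{\ket{z}-\ket{A^{-1}b}}\leq 2\delta$. The expected running time is $O(\tau C_{V})=O(\tau^{3}\mu(\sqrt{W}X\,|\,y)/\delta)$, and substituting $\tau=O(\kappa(A)\log(\kappa(A)/\delta))$ gives $\widetilde{O}(\kappa(A)^{3}\mu(\sqrt{W}X\,|\,y)/\delta)$ as claimed. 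The one remaining point needing care is confirming that $B=\sqrt{W}X$ is contractive and that $A=B^{T}B$ is psd with the assumed spectrum $[1/\kappa(A),1]$, which follows from the same rescaling of $X$ and $W$ used to normalize $A$ and $b$ in the linear system case.
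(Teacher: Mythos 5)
Your proposal follows the paper's own route almost exactly: the same reduction to the psd linear-system iteration of \thmref{qlinsys}, the same factorization $A=B^{T}B$ with $B=\sqrt{W}X$ so that generalized SVE for $B$ (with squared singular-value estimates and conditional rotation by $1-\rho\overline{\sigma}_{i}^{2}$) implements multiplication by $S=I-\rho A$, the same parameter $\mu(\sqrt{W}X\,|\,y)$, and the same assembly of $\tau=O(\kappa(A)\log(\kappa(A)/\delta))$, $\epsilon=O(\delta/\tau^{2})$ and $C_{V}$ into the stated bound. (Storing the scaled entries $\sqrt{w_{i}}x_{ij}$ directly is a harmless variant of the paper's choice to store $X$ and $w$ separately and build the modified preparation unitary $U'$ from $M_{w}=\max_{i}w_{i}\norm{x_{i}}^{2}$; the paper's version merely allows weights to be updated without rewriting $X$.)

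The one genuine gap is the affine step, which is exactly the point where this theorem differs from \thmref{qlinsys} and which the paper flags explicitly. In \propref{iterimp} the single-multiplication trick works because the entries of $A_{1}=\left(\begin{smallmatrix} -A & b \\ 0 & 0 \end{smallmatrix}\right)$ are literally stored in the data structure, and the zero padding guarantees the auxiliary coordinates of $A'x'$ vanish, so one SVE multiplication outputs exactly $(b-Ax,0)$. Here neither $A=X^{T}WX$ nor $b=X^{T}Wy$ is in memory (forming them is the cost one is trying to avoid), so that construction is unavailable; moreover ``appending $\sqrt{W}y$ as an extra row to $B$'' does not typecheck, since $\sqrt{W}y\in\R^{m}$ while the rows of $B$ lie in $\R^{n}$. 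The natural repair — append it as a \emph{column}, set $\tilde{B}=(B\,|\,\sqrt{W}y)$, and multiply by $\tilde{B}^{T}\tilde{B}$ via squared singular values — applied to $(-r_{0},1)$ yields $(b-Ar_{0},\; y^{T}Wy-b^{T}r_{0})$: unlike the zero-padded linear-systems case, the auxiliary coordinate does not vanish, so the output is not proportional to $L(r_{0})$ and one would have to flag the $(n{+}1)$-st coordinate into the failure branch of Definition \ref{31} and track the resulting amplitude loss. The paper sidesteps this entirely by factoring the affine update into two operations on matrices that \emph{are} stored: first create $y-X\theta_{0}$ by the symmetrization trick applied to $X$ and $y$, then multiply by $X^{T}W$ (implemented analogously via the weighted preparation unitary), using $b-A\theta_{0}=X^{T}W(y-X\theta_{0})$. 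Your proof needs either this two-stage factorization or the flag-based repair above; as written, the ``single SVE-based multiplication'' claim does not go through.
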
   
\begin{proof} 
The iterative update rule for weighted least squares in equation \eqref{owls} can be written as $\theta_{t+1} = \theta_{t} + \rho r_{t}$ where $r_{t} =  X^{T}  W y  -  X^{T} WX \theta_{t}$. These updates are analogous to the linear system updates in equation \eqref{plus} as $r_{t}=b - A\theta_{t}$ for $b= X^{T}  W y$ and $A= X^{T} WX$. The step size $\rho$ is 
analogous to the step size $\alpha$ for the linear system. It follows from Theorem \ref{qlinsys} that given an $(\epsilon, 1/poly(n))$ approximate step for the iterative method, the quantum iterative algorithm for weighted least squares would have the stated running time. 

 It remains to show how to implement the iterative step for the least squares problem, which is somewhat different from the case of linear systems as 
instead of the matrix $X^{T} WX$ and the vector $X^{T}  W y$, we have the matrix $X$ and vector $y$ stored in memory and weights $w$ arrive as a stream of entries 
$(i, w_{i})$ for $i\in [m]$. Nevertheless, the iterative step can be implemented in this setting as we show next. 

Note that $A = B^{T}  B$ where $B=\sqrt{W} X$, thus the eigenvalues of $A$ are the squared singular values of $B$. It therefore suffices to perform the generalized SVE for $B$. 
We assume that the data structures for performing generalized SVE for $X$ have been created. 

In order to perform the generalized SVE for $B$, we append to the data structure in Theorem \ref{dsplus} a variable $M_{w}$ that is initially a copy of $M$. Recall that the variable $M$ stores the 
maximum row norm $M = \max_{i\in [m]}\norm{a_{i}}^{2}$ for the matrix A stored in the data structure. The variable $M_{w}$ 
is updated whenever $w_{i}$ arrives or $\norm{x_i}$ gets updated so as to store $\max_{i} w_{i}\norm{x_{i}}^{2} $, 
that is if $M_{w} \leq w_{i} \norm{x_{i}}^{2}$ then set $M_{w}=  w_{i} \norm{x_{i}}^{2}$. We replace $M \to M_{w}$ and $\norm{x_{i}} \to \sqrt{w_{i}} \norm{x_{i}}$ in equation \eqref{tag} and follow exactly the same steps 
as in Theorem \ref{dsplus} to implement the unitary, 
\als{ 
&U' \ket{  i, 0^{\lceil \log (n+1) \rceil} } = \ket{i} \frac{1}{\sqrt{M_w}} \big ( \sum_{j \in [n]} \sqrt{w_{i}} x_{ij} \ket{j} + \nl 
&(M_{w} - w_{i} \norm{x_{i}}^{2})^{1/2}\ket{n+1}  \big )  
} 
Using $U', V$ instead of $U, V$ in Theorem \ref{sveplus1} we can perform generalized SVE for $B=\sqrt{W} X$
in time $\mu(B)$. In order to multiply by $A=B^{T}B$ for the iterative method, we perform generalized SVE for $B$ in equation \eqref{istep1} and 
then conditional rotation with factor $\alpha \overline{\sigma_{i}}^{2}$.   

Analogous to the above procedure one can also implement matrix multiplication for $B' = X^{T} W$. 
Note that the state $\ket{b}$ is not in the memory, so we can not do the first affine update used for linear systems (where we set $r_0 = b-A\theta_0$ for a random $\theta_0$). Instead we have $y$ and $X$ in memory, so we first do the affine step to create $(y-X\theta_0)$ and then multiply with the matrix $X^TW$, 
\[
b-A\theta_0 = X^TWy-X^TWX\theta_0 = X^TW (y-X\theta_0).
\] 
\end{proof} 
The procedure for performing generalized SVE for $B=\sqrt{W} X$ in the proof of Theorem \ref{qwlsq} implies that given the data structure for storing $X \in \R^{m\times n}$
and a weight vector $w\in \R^{m}$, it is possible to prepare the data structure for performing SVE for $B=\sqrt{W} X$ in time $O(m)$ where $W=diag(w)$. We note this fact as it may be useful 
for other applications. 

It is straightforward to add $\ell_{2}$ regularization to the weighted least squares problem. The loss function becomes $\sum_{i} w_{i} ( y_{i} - \theta^{T}  x_{i})^{2} + 
\lambda \norm{\theta}^{2}$ and the update rule changes to $r_{t} = b - A\theta_{t}$ for $b=X^{T} W y$ and $A= X^{T}  WX+ \lambda I$. The algorithm 
performs the generalized SVE for $X^{T}  WX+ \lambda I$ instead of $X^{T}  WX$.

\subsection{Stochastic gradient descent for Weighted Least Squares} 
In the classical setting, it is expensive to compute 
the gradient $\sum_{ i \in [m]} w_{i} ( y_{i} - \theta_{t}^{T}  x_{i}) x_{i} $ by summing over the entire dataset 
when the dataset size is large. Moreover, due to redundancy in the dataset, the gradient can be estimated by summing over randomly sampled batches. Stochastic gradient descent utilizes this fact in the classical setting to lower the cost of the updates. 
Stochastic gradient descent algorithms do not compute the gradient exactly, but estimate it over 
batches $\sum_{ i \in S_{j}} w_{i} ( y_{i} - \theta_{t}^{T}  x_{i}) x_{i}$ obtained by randomly partitioning the dataset.

In the quantum case, applying a linear system solver or the iterative method in Theorem \ref{qwlsq} for a 
large dataset would require a large-sized QRAM and coherent operations over a large number of qubits. 
Stochastic gradient descent remains relevant for quantum iterative methods since it can considerably reduce the size of the QRAM as well as the number of qubits on which we 
need to perform coherent operations. 

The stochastic gradient updates are defined for any choice of partition $\Pi= (S_{1}, S_{2}, \cdots, S_{k})$ for the dataset where $S_{i} \subset [m]$ for $i\in [k]$ are subsets of 
$[m]$ of equal size. The random partitioning can be easily implemented in the quantum setting by permuting the data before entering into the QRAM, thus storing matrices $A_{j}|b_{j}$ for batches of a fixed size. For a given partition $\Pi$ 
let $X_{j}$ be the the $|S_{j}| \times n$ matrix obtained by selecting the rows corresponding to $S_{j}$. Define $A_{j} = X_{j}^{T}  W_{j} X_{j}$ 
where $W_{j}$ is the diagonal matrix of weights restricted to $S_{j}$. 

The stochastic gradient descent algorithm starts with the initial condition $r_{0} =(b - A_{1}\theta_{0})$ and iteratively applies the updates $r_{t} = (I- \rho A_{t'}) r_{t-1}$ on the residuals with $t' = t +1 \mod k$. Note that as the number of steps $\tau$ is larger than the number of partitions $k$, we cyclically iterate the updates corresponding to the matrices $A_{j},  j \in [k]$. It is straightforward to implement these updates efficiently using Theorem \ref{qwlsq} as the matrices $A_{j}$ and the weights $W_{j}$ are stored in memory. The running time is linear in the parameter $\mu = \max_{i \in [k]} \mu(X_{k})$. 

The updates in the classical stochastic gradient descent algorithm are affine, our quantum iterative method can therefore simulate these updates. The convergence analysis in equation \eqref{three} is not applicable
for stochastic gradient descent, we refer to \cite{BCN16} for the classical convergence analysis. The correctness follows as the quantum algorithm follows is able to simulate each step of the classical stochastic gradient algorithm 
with sufficient precision. 

In the case of linear systems and least squares, the updates were of the form $r_{t+1} = (I- \rho A) r_{t}$ for a fixed matrix $A$, and hence we could simultaneously apply $t$ steps of the update in one step and hence have running time $C_{U}=O(C_V)$ using Proposition \ref{48}. In the stochastic gradient descent case, we have $k$ different matrices $A_{t}$ where 
$k$ is the number of partitions for the dataset, and these matrices have different eigenbases. Therefore, we can only perform the linear updates sequentially, and the cost $C_{U}$ for implementing $U$ is 
$O(\tau C_{V})$ in this case. Claim \ref{aim} therefore implies that the running time for the stochastic gradient descent is $C_{U}=O(\tau^{2} C_{V})$ as opposed to $C_{U}=O(\tau C_{V})$ for the linear systems and weighted least squares. 

Similar to the running time analysis in Section \ref{itertime} we can also compute the running time for quantum stochastic gradient descent
algorithm. If we use our linear system solver with precision dependence $O(1/\epsilon)$, then the running time is $O(\tau^{4} \mu)$. However if we use a
quantum linear system solver with precision dependence $\log (1/\epsilon)$, then the running time becomes $\widetilde{O}(\tau^{2} \mu)$.

\bibliographystyle{IEEEtranS} 
\bibliography{bibliography.bib}

\begin{thebibliography}{10}
\providecommand{\url}[1]{#1}
\csname url@samestyle\endcsname
\providecommand{\newblock}{\relax}
\providecommand{\bibinfo}[2]{#2}
\providecommand{\BIBentrySTDinterwordspacing}{\spaceskip=0pt\relax}
\providecommand{\BIBentryALTinterwordstretchfactor}{4}
\providecommand{\BIBentryALTinterwordspacing}{\spaceskip=\fontdimen2\font plus
\BIBentryALTinterwordstretchfactor\fontdimen3\font minus
  \fontdimen4\font\relax}
\providecommand{\BIBforeignlanguage}[2]{{%
\expandafter\ifx\csname l@#1\endcsname\relax
\typeout{** WARNING: IEEEtranS.bst: No hyphenation pattern has been}%
\typeout{** loaded for the language `#1'. Using the pattern for}%
\typeout{** the default language instead.}%
\else
\language=\csname l@#1\endcsname
\fi
#2}}
\providecommand{\BIBdecl}{\relax}
\BIBdecl

\bibitem{A15}
S.~Aaronson, ``Read the fine print,'' \emph{Nature Physics}, vol.~11, no.~4,
  pp. 291--293, 2015.

\bibitem{A12}
A.~Ambainis, ``Variable time amplitude amplification and quantum algorithms for
  linear algebra problems,'' in \emph{STACS'12 (29th Symposium on Theoretical
  Aspects of Computer Science)}, vol.~14.\hskip 1em plus 0.5em minus
  0.4em\relax LIPIcs, 2012, pp. 636--647.

\bibitem{AGJMP15}
S.~Arunachalam, V.~Gheorghiu, T.~Jochym-O'Connor, M.~Mosca, and P.~V.
  Srinivasan, ``On the robustness of bucket brigade quantum ram,'' \emph{New
  Journal of Physics}, vol.~17, no.~12, p. 123010, 2015.

\bibitem{BCOW17}
D.~W. Berry, A.~M. Childs, A.~Ostrander, and G.~Wang, ``Quantum algorithm for
  linear differential equations with exponentially improved dependence on
  precision,'' \emph{Communications in Mathematical Physics}, vol. 356, no.~3,
  pp. 1057--1081, 2017.

\bibitem{BCN16}
L.~Bottou, F.~E. Curtis, and J.~Nocedal, ``Optimization methods for large-scale
  machine learning,'' \emph{Siam Review}, vol.~60, no.~2, pp. 223--311, 2018.

\bibitem{BHMT00}
G.~Brassard, P.~Hoyer, M.~Mosca, and A.~Tapp, ``Quantum amplitude amplification
  and estimation,'' \emph{Contemporary Mathematics}, vol. 305, pp. 53--74,
  2002.

\bibitem{B05}
D.~Bulger, ``Quantum basin hopping with gradient-based local optimisation,''
  \emph{arXiv preprint quant-ph/0507193}, 2005.

\bibitem{CGJ18}
S.~Chakraborty, A.~Gily{\'e}n, and S.~Jeffery, ``The power of block-encoded
  matrix powers: improved regression techniques via faster {Hamiltonian}
  simulation,'' \emph{Proceedings of 46th International Colloquium on Automata,
  Languages and Programming (ICALP)}, 2019.

\bibitem{C10}
A.~M. Childs, ``On the relationship between continuous-and discrete-time
  quantum walk,'' \emph{Communications in Mathematical Physics}, vol. 294,
  no.~2, pp. 581--603, 2010.

\bibitem{CKS15}
A.~M. Childs, R.~Kothari, and R.~D. Somma, ``Quantum algorithm for systems of
  linear equations with exponentially improved dependence on precision,''
  \emph{SIAM Journal on Computing}, vol.~46, no.~6, pp. 1920--1950, 2017.

\bibitem{GAW17}
A.~Gily{\'e}n, S.~Arunachalam, and N.~Wiebe, ``Optimizing quantum optimization
  algorithms via faster quantum gradient computation,'' in \emph{Proceedings of
  the Thirtieth Annual ACM-SIAM Symposium on Discrete Algorithms}.\hskip 1em
  plus 0.5em minus 0.4em\relax Society for Industrial and Applied Mathematics,
  2019, pp. 1425--1444.

\bibitem{GLSW18}
A.~Gily{\'e}n, Y.~Su, G.~H. Low, and N.~Wiebe, ``Quantum singular value
  transformation and beyond: exponential improvements for quantum matrix
  arithmetics,'' in \emph{Proceedings of the 51st Annual ACM SIGACT Symposium
  on Theory of Computing}.\hskip 1em plus 0.5em minus 0.4em\relax ACM, 2019,
  pp. 193--204.

\bibitem{GLM08a}
V.~Giovannetti, S.~Lloyd, and L.~Maccone, ``Architectures for a quantum random
  access memory,'' \emph{Physical Review A}, vol.~78, no.~5, p. 052310, 2008.

\bibitem{H14}
A.~W. Harrow, ``Review of quantum algorithms for systems of linear equations,''
  \emph{arXiv preprint arXiv:1501.00008}, 2014.

\bibitem{HHL09}
A.~W. Harrow, A.~Hassidim, and S.~Lloyd, ``Quantum algorithm for linear systems
  of equations,'' \emph{Physical review letters}, vol. 103, no.~15, p. 150502,
  2009.

\bibitem{J05}
S.~P. Jordan, ``Fast quantum algorithm for numerical gradient estimation,''
  \emph{Physical review letters}, vol.~95, no.~5, p. 050501, 2005.

\bibitem{J09}
S.~P. Jordan, ``Permutational quantum computing,'' \emph{arXiv preprint
  arXiv:0906.2508}, 2009.

\bibitem{KLLP18}
I.~Kerenidis, J.~Landman, A.~Luongo, and A.~Prakash, ``q-means: A quantum
  algorithm for unsupervised machine learning,'' \emph{Proceedings of Neural
  Information Processing Systems (NeurIPS)}, 2019.

\bibitem{KP16}
I.~Kerenidis and A.~Prakash, ``Quantum recommendation systems,'' in
  \emph{Proceedings of the 7th conference on Innovations in Theoretical
  Computer Science (ITCS)}, 2017.

\bibitem{KP18}
I.~Kerenidis and A.~Prakash, ``A quantum interior point method for {LP}s and
  {SDP}s,'' \emph{arXiv:1808.09266}, 2018.

\bibitem{K95}
A.~Y. Kitaev, ``Quantum measurements and the abelian stabilizer problem,''
  \emph{arXiv preprint quant-ph/9511026}, 1995.

\bibitem{KSV02}
A.~Y. Kitaev, A.~Shen, and M.~N. Vyalyi, \emph{Classical and quantum
  computation}.\hskip 1em plus 0.5em minus 0.4em\relax American Mathematical
  Society Providence, 2002, vol.~47.

\bibitem{LZ17}
Y.~Liu and S.~Zhang, ``Fast quantum algorithms for least squares regression and
  statistical leverage scores,'' \emph{Theoretical Computer Science}, vol. 657,
  Part A, pp. 38--47, 2017.

\bibitem{LMR13a}
S.~Lloyd, M.~Mohseni, and P.~Rebentrost, ``Quantum algorithms for supervised
  and unsupervised machine learning,'' \emph{Arxiv preprint:1307.0411}, 2013.

\bibitem{LMR14}
S.~Lloyd, M.~Mohseni, and P.~Rebentrost, ``Quantum principal component
  analysis,'' \emph{Nature Physics}, vol.~10, no.~9, pp. 631--633, 2014.

\bibitem{M90}
R.~Mathias, ``The spectral norm of a nonnegative matrix,'' \emph{Linear Algebra
  and its Applications}, vol. 139, pp. 269--284, 1990.

\bibitem{LMR14a}
P.~Rebentrost, M.~Mohseni, and S.~Lloyd, ``Quantum support vector machine for
  big data classification,'' \emph{Physical review letters}, vol. 113, no.~13,
  p. 130503, 2014.

\bibitem{RSPS16}
P.~Rebentrost, M.~Schuld, L.~Wossnig, F.~Petruccione, and S.~Lloyd, ``Quantum
  gradient descent and {Newton's} method for constrained polynomial
  optimization,'' \emph{New Journal of Physics}, vol.~21, no.~7, p. 073023,
  2019.

\bibitem{SB98}
R.~S. Sutton and A.~G. Barto, ``Introduction to reinforcement learning,'' 1998.

\bibitem{S04}
M.~Szegedy, ``Quantum speed-up of {Markov} chain based algorithms,'' in
  \emph{Foundations of Computer Science, 2004. Proceedings. 45th Annual IEEE
  Symposium on}.\hskip 1em plus 0.5em minus 0.4em\relax IEEE, 2004, pp. 32--41.

\bibitem{T13}
A.~Ta-Shma, ``Inverting well conditioned matrices in quantum logspace,'' in
  \emph{Proceedings of the forty-fifth annual ACM symposium on Theory of
  computing}.\hskip 1em plus 0.5em minus 0.4em\relax ACM, 2013, pp. 881--890.

\bibitem{T18}
E.~Tang, ``A quantum-inspired classical algorithm for recommendation systems,''
  in \emph{Proceedings of the fifty-first annual ACM symposium on Theory of
  computing}, 2019.

\bibitem{WBL12}
N.~Wiebe, D.~Braun, and S.~Lloyd, ``Quantum algorithm for data fitting,''
  \emph{Physical review letters}, vol. 109, no.~5, p. 050505, 2012.

\bibitem{WZP17}
L.~Wossnig, Z.~Zhao, and A.~Prakash, ``A quantum linear system algorithm for
  dense matrices.'' \emph{Physical review letters}, vol. 120, no.~5, p. 050502,
  2018.

\end{thebibliography}

\end{document}